\documentclass[a4paper,12]{article}

\usepackage{format}
\usepackage{graphicx}
\usepackage{epsf}
\usepackage{rsfs}
\usepackage{bbm}
\usepackage{tikz-cd}

\bibliographystyle{alpha}
\binoppenalty=10000

\title{Confounding Ghost Channels and Causality: \\
A New Approach to Causal Information Flows} 

\author{Nihat Ay${}^{1,2,3}$}

\begin{document}

\maketitle

\begin{center}
${}^{1}$Max Planck Institute for Mathematics in the Sciences, Leipzig, Germany \\
${}^{2}$Leipzig University, Leipzig, Germany \\
${}^{3}$Santa Fe Institute, Santa Fe, NM, USA \\
\end{center}

\begin{abstract} 
Information theory provides a fundamental framework for the quantification of information flows through channels, formally Markov kernels.  
However, quantities such as mutual information and conditional mutual information do not necessarily reflect the causal nature 
of such flows. We argue that this is often the result of conditioning based on $\sigma$-algebras that are not associated with the given channels.    
We propose a version of the (conditional) mutual information based on families of $\sigma$-algebras that are coupled with the underlying 
channel. This leads to filtrations, which allow us to prove a corresponding causal chain rule as a basic requirement within the presented approach. \\
 
{\bf \em Keywords:\/} information flow, causality, mutual information, conditional mutual information, filtration.   

{
\small
\tableofcontents
}
\end{abstract}

%     ************************************************** SECTION **************************************************

\section{Introduction: Information theory and causality}  \label{intro}
Classical information theory \cite{Sh48} is based on the definition of {\em Shannon entropy\/}, a measure of uncertainty about the outcome of a variable  $Z$:
\begin{equation}
      H(Z)  \; = \; - \sum_z p(z) \log p(z), 
\end{equation} 
where $p(z) = {\Bbb P}(Z = z)$ denotes the distribution of $Z$.  
(Throughout this introduction, we consider only variables $X$, $Y$, and $Z$ with finite state sets $\mathsf{X}$, $\mathsf{Y}$, and $\mathsf{Z}$, respectively.)
Shannon entropy serves as a building block of further important quantities. The flow of information from a sender $X$ to a receiver $Z$, for instance, can be 
quantified as the reduction of uncertainty about the outcome of $Z$ based on the outcome of $X$. 
More precisely, we compare two uncertainties here, 
the uncertainty about the outcome of $Z$, 
that is $H(Z)$, 
with the uncertainty about the outcome of $Z$ {\em after\/} knowing the outcome of $X$, that is 
\begin{equation}
      H(Z | X) \; = \;  - \sum_{x} p(x) \sum_{z} p(z | x) \log p(z|x),  
\end{equation}
where $p(z | x) = {\Bbb P}(Z = z | X = x)$ denotes the conditional distribution of $Z$ given $X$. 
Naturally, the latter uncertainty is smaller than or equal to the first one, leading to another fundamental quantity of information theory, the 
{\em mutual information\/}: 
\begin{equation} \label{mutinf1}
    I(X ; Z) \; = \; H(Z) - H(Z  |  X). 
\end{equation}
This difference can also be expressed in geometric terms as the {\em KL-divergence\/} 
\begin{equation} \label{distbetweenchannels}
    I(X ; Z) \; = \; \sum_{x} p(x) \sum_{z} p(z | x) \log \frac{p(z | x)}{p(z)}.
\end{equation}
The KL-divergence plays an important role in information geometry as a canonical divergence \cite{AN00, Am16, AJLS17, AyAm15}.
Such a divergence is characterised in terms of natural geometric properties. It is remarkable that this purely geometric approach
yields the fundamental information-theoretic quantities which were previously derived from a set of axioms that are formulated in non-geometric terms.  
\medskip

Typically, the conditional distribution $p(z | x)$ is interpreted mechanistically as a channel which receives $x$ as an input and generates $z$ as an output. 
In this interpretation, the stochasticity of a channel is considered to be the effect of external or hidden disturbances of a deterministic map. 
This is formalised in terms of a so-called {\em structural equation\/}
\begin{equation} \label{structural}
           Z = f(X , U), 
\end{equation}
with a deterministic map $f$ and a noise variable $U$ that is independent of $X$ \cite{Pe00}. 
With the representation (\ref{structural}),  
the conditional probability distribution can be interpreted as a (probabilitic) causal effect of $X$ on $Z$. 
This interpretation provides the basis for Pearl's influential proposal of a general theory of causality \cite{Pe00}.    
The mutual information (\ref{mutinf1}) then becomes a measure of the causal information flow from $X$ to $Z$ \cite{AP08}, which is consistent with Shannon's 
original idea of the amount of information transmitted through a channel \cite{Sh48}. This consistency, however, is apparently violated when dealing with 
variations or extensions of the sender-receiver setting. We are now going to highlight instances of such inconsistency that will play an important role in this article.  

\section{Confounding ghost channels}
The mutual information is symmetric, that is
$I(X ; Z) = I(Z; X)$. Interpreting it as a measure of causal information flow,   
this symmetry suggests that we have the same amount of causal information flow in both directions, even though the channel goes from 
$X$ to $Z$ so that there cannot be any flow of information in the opposite direction. What is wrong here? This apparent problem, let us call it ``the symmetry puzzle'',   
can be resolved quite easily.  We {\em can\/} revert the direction and compute the conditional distribution $p(x | z) = \frac{p(x)}{p(z)} \, p(z | x)$, 
based on elementary rules of probability theory and
without reference to any mechanisms. 
Furthermore, this conditional distribution 
{\em can\/} be mechanistically interpreted and represented in terms of a structural equation (\ref{structural}). (This is always possible for a given conditional distribution.)  
Such a representation introduces a hypothetical channel
for generating the reverted conditional distribution $p(x | z)$, a kind of ``ghost'' channel that is actually not there.    
The mutual information then quantifies the causal information flow of this hypothetical channel. 
The symmetry of the mutual information simply means that the actual causal information flow in forward direction will be equal to the causal information flow of any hypothetical 
channel in backward direction that is capable of generating the conditional distribution $p(x | z)$. The symmetry puzzle, however, is not the only apparent inconsistency between
the (conditional) mutual information and causality. We are now going to highlight another problem, which is closely related to the symmetry puzzle but requires a deeper analysis for its solution.    
\medskip
 
We now assume that the channel receives $x$ and $y$ as inputs and generates $z$ 
as an output. 
With the corresponding conditional distribution $p(z | x, y) = {\Bbb P}(Z = z | X = x, Y = y)$ we have the {\em conditional mutual information\/} 
of $Y$ and $Z$ given $X$:   
\begin{eqnarray}
    I(Y ; Z | X) 
       & = & H(Z | X) - H(Z  |  X , Y) \label{entropdiff} \\
       & = & \sum_{x , y} p(x, y) \sum_{z} p(z | x , y) \log \frac{p(z | x , y)}{p( z  | x)}. \label{kldiv}
\end{eqnarray}
According to (\ref{entropdiff}), the conditional mutual information compares the uncertainty about $z$ given $x$, 
before and after observing the outcome $y$, 
reflected by the conditional probabilities $p(z|x)$ and $p(z| x, y)$, respectively. 
The representation (\ref{kldiv}) makes this comparison more explicit as a deviation of $p(z | x , y)$ from $p(z | x)$. 
Together with (\ref{distbetweenchannels}), we obtain the chain rule 
\begin{eqnarray} 
I(X, Y ; Z) & = & \sum_{x,y} p(x,y) \sum_{z} p(z | x,y) \log \frac{p(z | x,y)}{p(z)} \\
                & = &  \sum_{x,y} p(x,y) \sum_{z} p(z | x,y) \left[ \log \frac{p(z | x)}{p(z)}  +  \log \frac{p(z | x,y)}{p(z|x)} \right] \label{decompoln} \\
                & = & I(X ; Z) + I(Y; Z |X) . \label{twoterms}
\end{eqnarray}
For the computation of both terms, $I(X ; Z)$ and $I(Y; Z |X)$, we have to evaluate the ``reduced'' conditional distribution $p(z | x)$. 
It is obtained from the original one in the following way: 
\begin{equation} \label{condprob}
    p(z | x) \; = \; \sum_{y} p(y | x) \, p(z | x, y). 
\end{equation}
This conditional distribution represents a second kind of hypothetical channel, a ``ghost channel'', which 
screens off the actual flow of information. It can be sensitive to information about $x$ that is not necessarily 
employed by the original channel $p(z | x,y)$. More precisely, given two states $x$, $x'$ that satisfy 
$p(z | x,y) = p(z| x',y)$ for all $z$ and all $y$ we cannot expect $p(z| x) = p(z | x')$ for all $z$. 
This is a consequence of the coupling through $p(y | x)$, on the RHS of (\ref{condprob}). 
In the most extreme case, $y$ is simply a deterministic map of $x$, so that the knowledge of 
$y$ does not provide any additional information about $z$, that is $p(z| x,y) = p(z | x)$. In the following example we study this  
case more explicitly and thereby highlight the inconsistency of the terms $I(X ; Z)$ and $I(Y ; Z |X)$ in (\ref{twoterms})
with the underlying causal structure. 
We will argue that the conditional distribution (\ref{condprob}) has to be modified in order to allow for a causal interpretation.       
\medskip

\begin{example} \label{copycor}
Consider three variables $X,Y,Z$ with values $-1$ and $+1$, and assume that 
$Z$ is obtained as a copy of $Y$, that is
\begin{equation} \label{copyy}
     p(z | x,y) = 
     \left\{
        \begin{array}{c@{,\quad}l} 
           1 & \mbox{if $z = y$} \\
           0 & \mbox{otherwise} 
        \end{array} 
     \right. .   
\end{equation}  
This means that all information required for the output $Z$ is contained in $Y$. Intuitively, we would expect from a measure 
of information flow to assign zero for the flow from $X$ to $Z$ and a positive value to the flow of information from $Y$ to $Z$ given $X$. 
This is however not what we get with the usual definitions of mutual information and conditional mutual information. The reason for that is the stochastic dependence 
of the inputs $X$ and $Y$. To be more precise, let us assume that the input distribution is given as 
\begin{equation} \label{jointdistr}
          p(x,y) \; = \; \frac{e^{\beta \, x y}}{\sum_{x',y' \in \{\pm 1\}} e^{\beta \, x' y'}} ,
\end{equation}
where the parameter $\beta$ controls the coupling of the inputs. 
This implies $p(x) = {\Bbb P}(X = x) = 1/2$ and $p(y) = {\Bbb P}(Y = y) = 1/2$ for all $x,y \in \{\pm 1\}$. 
We can decompose the full mutual information, as a measure of information flow from $X$ and $Y$ together to $Z$, in the following way 
\begin{equation} \label{decompo1}
  I_\beta(X , Y ; Z) \, = \, \underbrace{I_\beta(Y ; Z)}_{= \, \log 2} + \underbrace{I_\beta(X ; Z | Y)}_{= \, 0} .
\end{equation}
(The subscript $\beta$ indicates the dependence of the respective information-theoretic quantities on this parameter.)
This is consistent with the intuition that $Z$ is receiving all information from $Y$ and no information from $X$. However, we observe an inconsistency if we decompose 
the full mutual information in a different way:   
\begin{equation} \label{decompo2}
   I_\beta (X,Y ; Z) \, = \, I_\beta (X; Z) + I_\beta (Y ; Z | X) .
\end{equation}
For the two terms on the RHS of (\ref{decompo2}) we obtain 
\begin{eqnarray*}
     I_\beta(X;Z) & = & \log (2) - \frac{\log(1 + e^{2 \beta})}{1 + e^{2 \beta}} - \frac{\log(1 + e^{- 2 \beta})}{1 + e^{- 2 \beta}}, \\
     I_\beta(Y ; Z | X) & = &   \frac{\log(1 + e^{2 \beta})}{1 + e^{2 \beta}} + \frac{\log(1 + e^{- 2 \beta})}{1 + e^{- 2 \beta}} .
\end{eqnarray*}
These functions are shown in Figure \ref{beta}. In the limit $\beta \to + \infty$ the two inputs become completely correlated with support $(-1,-1)$ and $(+1,+1)$. 
Correspondingly, for $\beta \to - \infty$ we have complete anti-correlation, and the support is $(-1,+1)$ and $(+1,-1)$.
With (\ref{decompo2}), we obtain the following decomposition: 
\begin{eqnarray}
    I(X , Y ; Z) & = & \lim_{\beta \to \infty}  I_\beta(X , Y ; Z)  \nonumber \\
                     & = & \lim_{\beta \to \infty} I_\beta (X; Z) + \lim_{\beta \to \infty} I_\beta (Y ; Z | X) \nonumber \\
                     & = & \underbrace{I (X; Z)}_{ = \, \log 2} + \underbrace{I (Y ; Z | X)}_{ = \, 0}. \label{decompo2lim}
\end{eqnarray}
The decomposition (\ref{decompo2lim}) gives the impression that $Z$ is receiving all information from 
$X$ and no information from $Y$. However, we know, by construction of this example, that this is not the case. 
The actual situation is better reflected by the decomposition (\ref{decompo1}).  
\begin{figure}[h]
\begin{center}
           \includegraphics[width=8cm]{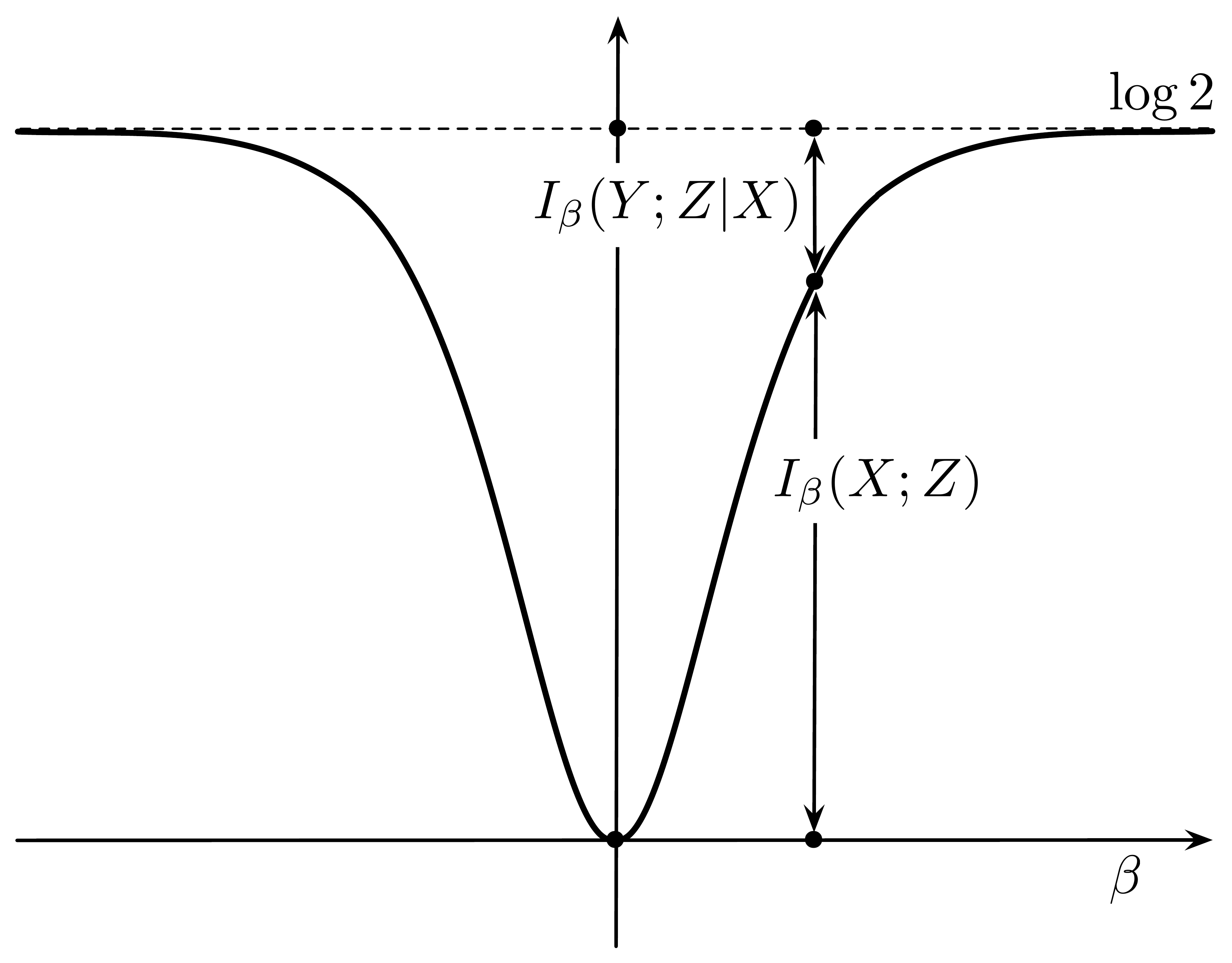}
\caption{The mutual information $I_\beta(X;Z)$ and the conditional mutual information $I_\beta(Y;Z | X)$ as functions of $\beta$. 
Even though the channel does not employ any information from $X$, the mutual information $I_\beta (X; Z)$ converges to the 
maximal value for $\beta \to \infty$.}  
\end{center}
\label{beta}
\end{figure}   
\end{example} 

The problem highlighted in Example \ref{copycor} can be resolved by an appropriate modification of the conditional probability 
(\ref{condprob}). We are now going to outline this modification, which will provide the main idea of this article. 
In a first step, let us assume that $\bar{y}$ is fixed as an input to the channel. Which information about $x$ does the channel then use for generating $z$? In order to qualitatively describe 
that information, we lump any two states $x$ and $x'$ together whenever the channel cannot distinguish them, that is
\[
      p(z | x , \bar{y}) \; = \; p(z | x' , \bar{y})
\]     
for all $z$. This defines a partition $\alpha_{X,\bar{y}}$ of the state set of $X$ that depends on $\bar{y}$. In a second step, we consider the join of all these partitions, 
that is their coarsest refinement. More precisely, we define 
\begin{equation} \label{partitionalpha}
      \alpha_X \; := \; \bigvee_{\bar{y} \in \mathsf{Y}} \alpha_{X,\bar{y}} \; := \; 
      \left\{ \bigcap_{\bar{y} \in \mathsf{Y}} A_{\bar{y}}  \; : \; 
      A_{\bar{y}} \in \alpha_{X, \bar{y}}, \;\; \bar{y} \in \mathsf{Y} \right\}. 
\end{equation}
The partition $\alpha_X$ represents a qualitative description of the information in $X$ that is used by the channel $p(z | x, y)$. 
Denote by $A_x$ the set in $\alpha_X$ that contains $x$. 
When the channel receives $x$, in addition to $y$, then it does not ``see'' the full $x$ but only the class $A_x$, and it is easy to verify $p(z | x,y) = p(z | A_x, y)$. 
Therefore we replace the conditioning $p(z | x)$ in the above formula $(\ref{condprob})$ by 
\begin{equation} \label{condi2}
   \hat{p}(z | x) \; := \; p(z | A_x) 
                       \; = \; \sum_{y} p(y | A_x) \, p(z | A_x , y) 
                       \; = \; \sum_{x' \in A_x} p(x' | A_x) \, p(z | x') .
\end{equation}
This shows that the new conditional distribution $\hat{p}(z | x)$ is obtained by averaging the previous one, $p(z | x)$, 
according to the information that is actually used by the channel $p(z|x,y)$. Now, replacing in (\ref{decompoln}) the conditional distribution 
$p(z |x )$ by $\hat{p}(z | x)$ leads to a corresponding modification of the mutual information and the conditional mutual information: \\
\begin{eqnarray}
   I(X \to Z)       & := &  \sum_{x} p(x) \sum_{z} \hat{p}(z | x) \log\frac{\hat{p}(z | x)}{p(z)} \\
   I(Y \to Z | X) & := & \sum_{x , y} p(x, y) \sum_{z} p(z | x , y) \log \frac{p(z | x , y)}{\hat{p}( z  | x)} .
\end{eqnarray}
It is easy to see that  
\begin{equation}
     I(X \to Z) \; \leq \; I(X ; Z), \qquad  I(Y \to Z | X) \; \geq \; I(Y ; Z | X).
\end{equation}
However, the sum does not change and we have the chain rule 
\begin{equation}
     I(X,Y; Z) \; = \;  I(X \to Z) +  I(Y \to Z | X) .
\end{equation}
With this new definition, we come back to Example \ref{copycor}. The channel defined by (\ref{copyy}) does not use any information from $X$. Therefore, 
$\alpha_{X,\bar{y}} = \{\mathsf{X}\}$ for all $\bar{y} \in \mathsf{Y}$, which implies $\alpha_X = \{ \mathsf{X}\}$. With formula (\ref{condi2}) we obtain 
$\hat{p}(z | x) = p(z | \mathsf{X}) = p(z)$, and therefore 
\begin{equation} \label{decomp2}
     I(X \to Z) \, = \, 0 , \qquad \mbox{and} \qquad I(Y \to Z | X) \, = \, \log 2.
\end{equation}
If we compare this with (\ref{decompo2lim}) we see that the information is shifted from the first to the second term
which corresponds to the variable that has the actual causal effect on $Z$. On the other hand, 
in both cases the sum of the two contributions equals $\log 2$, the full mutual information $I(X,Y ; Z)$. 
\medskip

Causality plays an important role in time series analysis. In this context, {\em Granger causality\/} \cite{Gra69, Gra80} has been the subject of extensive debates which tend 
to highlight its non-causal nature. 
Schreiber proposed an information-theoretic quantification of Granger causality, referred to as {\em transfer entropy\/}, which 
is based on conditional mutual information \cite{Schr00, BBHL16}. Even though transfer entropy is an extremely useful and 
widely applied quantity, it is generally accepted that it has shortcomings as a measure of causal information flow. 
In particular, it can vanish in cases where the causal effect is the strongest possible. We argue that this is again a result of a ghost channel that is involved in the computation of the classical conditional mutual information and screens off the actual causal information flow. This is demonstrated in the following example which is taken from \cite{AP08}. 
Essentially, this example is a reformulation of Example \ref{copycor}, thereby adjusted to the context of time series and stochastic processes.

\begin{example}[Transfer entropy] Consider a  stochastic process $(X_m,Y_m)$, $m = 1,2,\dots$, with state space $\{\pm 1\}^2$
and define $X^m := (X_1,\dots,X_m)$ and $Y^m := (Y_1,\dots, X_m)$. 
The transfer entropy at time $m$ is defined as 
\[
     T(Y^{m-1} \to X_m) \; := \; I(Y^{m-1}; X_m | X^{m - 1}). 
\]
Thus, the transfer entropy quantifies how much information the variables $Y_1, \dots, Y_{m - 1}$ contribute to the evaluation of $X_m$, in addition to the information in $X_1,\dots,X_{m - 1}$.  
We assume that the process is a Markov chain,
given by a transition matrix of the form  
\[
      p(x',y' | x , y) \; = \; p (x' | x,y) \, p(y' | x,y),   
\]
where 
\[
       p (x' | x,y) = \frac{1}{1 + e^{2 \beta x' y}}, \qquad  p (y' | x,y) = \frac{1}{1 + e^{2 \beta y' y}} .
\]
The causal structure of the dynamics is represented by the following diagram: 
\[
\begin{tikzcd}
Y_1 \arrow[r] \arrow[dr] & Y_2 \arrow[r] \arrow[dr] & Y_3    \arrow[r] \arrow[dr]   & \dots \arrow[r] \arrow[dr] & Y_{m-1}  \arrow[r] \arrow[dr] & Y_m \\
X_1                               & X_2                                &  X_3                                   & \dots & X_{m-1}                                & X_m
\end{tikzcd}
\]
As a stationary distribution we have 
\[
     p(+1,+1) \, = \, p(-1, -1) \, = \, \frac{1}{2} - ab, \qquad p(+1,-1) \, = \, p(-1, +1) \, = \, ab, 
\]
where 
\[
              a = \frac{1}{1 + e^{2 \beta}}, \qquad  b = \frac{1}{1 + e^{- 2 \beta}}. 
\]
The transfer entropy can be upper bounded as follows (the subscript $\beta$ indicates the dependence on the coupling parameter $\beta$):  
\begin{eqnarray*}
   T_\beta(Y^{m - 1} \to X_m) 
      & = & I_\beta(Y^{m - 1} ; X_m | X^{m - 1}) \\
      & = & I_\beta(Y_{m - 1} ; X_m | X^{m - 1}) \\
      & = & H_\beta(X_m | X^{m - 1}) - H_\beta(X_m | X^{m - 1}, Y_{m - 1}) \\
      & = & H_\beta (X_m | X^{m - 1}) - H_\beta(X_m | X_{m - 1}, Y_{m - 1}) \\ 
      & \leq & H_\beta(X_m | X_{m - 1}) - H_\beta(X_m | X_{m - 1}, Y_{m - 1}) \\
      & = & I_\beta(Y_{m - 1} ; X_m | X_{m - 1}) .
\end{eqnarray*}
For $\beta = 0$, we have an i.i.d. process with uniform distribution over the states $(+1, + 1)$, $(-1, + 1)$, $(+1, - 1)$, and $(-1,-1)$. For $\beta \to \infty$, we obtain the deterministic transition 
\[
     (x,y) \; \mapsto \; (-y, -y).
\] 
In this limit, the variables $(X_m,Y_m)$ are completely correlated with $p(+1,+1) = p(-1,-1) = \frac{1}{2}$. In both cases, $\beta = 0$ and $\beta \to \infty$, the 
conditional mutual information $I_\beta (Y_{m-1} ; X_m | X_{m - 1})$, and therefore the transfer entropy $T_\beta(Y^{m - 1} \to X_m)$, vanishes. For $\beta = 0$, this does 
not represent a problem because any measure of causal information flow should vanish in the i.i.d. case. However, for $\beta \to \infty$, 
the variable $X_m$ is causally determined by $Y_{m -1}$. Therefore, a measure of casual information flow should be maximal in this case. 
This is not reflected by the transfer entropy. Let us compare this with the information flow measure proposed in this article. 
Given that $X_m$ only depends on $Y_{m - 1}$, the partition (\ref{partitionalpha}) is trivial, that is $\alpha = \{\mathsf{X}\}$. Therefore, 
\[
     I(Y^{m - 1} \to X_m | X^{m - 1}) \; = \; I(Y_{m - 1} ; X_m).
\]   
This quantity is converging to the maximal value $\log 2$ for $\beta \to \infty$. For comparison, both functions are plotted in Figure \ref{transfer}. 
\begin{figure}[h]
\begin{center}
           \includegraphics[width=8cm]{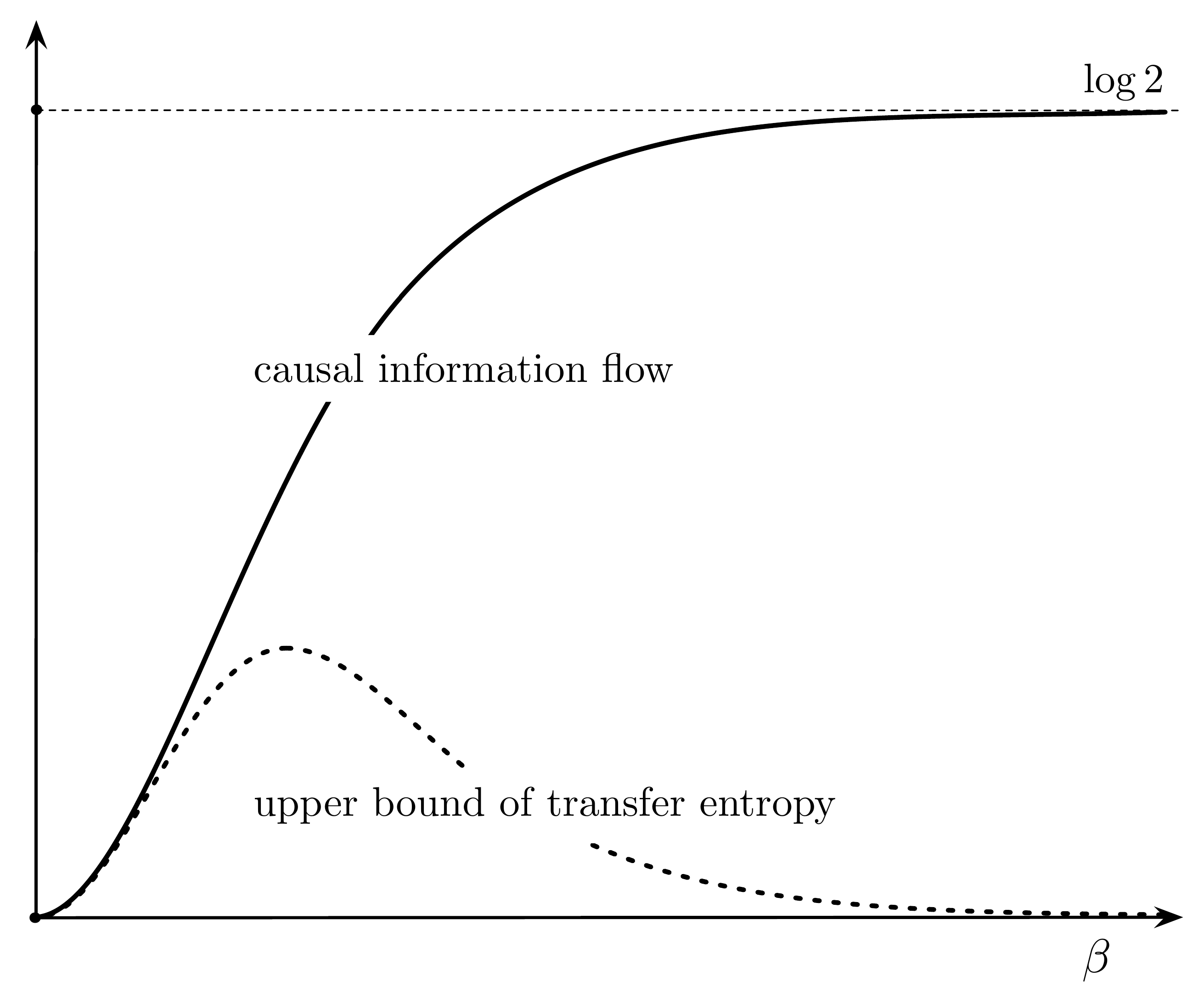}
\caption{Dashed line: 
the conditional mutual information $I_\beta(Y_{m - 1}; X_m | X_{m - 1})$  as an upper bound of the transfer entropy $T_\beta(Y^{m - 1} \to X_m)$; 
solid line: the 
causal information flow $I_\beta(Y^{m - 1} \to X_m | X^{m - 1})$ which coincides with the mutual information 
$I_\beta(Y_{m - 1};  X_m)$ in this example.}  
\end{center}
\label{transfer}
\end{figure}   
\end{example}

In what follows, we will extend the idea outlined in this section to a more general context of measurable spaces, probability measures, and Markov kernels. 
In further steps, 
we will also consider more input nodes. 

\section{General information-theoretic quantities} \label{geninftheo}
In the previous sections, we reviewed fundamental information-theoretic quantities as they are introduced in standard textbooks such as \cite{CT06}.
In this section, we offer an alternative review from a measure-theoretic perspective (see, for instance, \cite{Ka99}). 
This more abstract setting will allow us to identify natural operations and definitions which are not always visible when dealing with finite state spaces.    
   
\paragraph{Shannon entropy}
For a probability space $(\Omega, \rscr{F}, {\Bbb P})$, and a finite measurable 
partition $\gamma = \{C_1, \dots, C_m\}$, that is  $C_i \in \rscr{F}$, $C_i \cap C_j = \emptyset$ for all $i \not= j$, and $\bigcup_{i = 1}^m C_i = \Omega$, 
the {\em Shannon entropy\/} of $\gamma$ is given by
\begin{equation} \label{entropofpart}
    H(\gamma) \; := \; - \sum_{C \in \gamma} {\Bbb P}(C) \log {\Bbb P}(C).
\end{equation}
As a local version of the Shannon entropy, we define
\[
   h(\gamma) \; := \; - \sum_{C \in \gamma} \mathbbm{1}_{C} \log {\Bbb P}(C) ,
\]
where $\mathbbm{1}_{C}$ is the indicator function of $C$. 
Denoting by $C_\omega$ the set in $\gamma$ that contains $\omega \in \Omega$, 
we have $h (\gamma)( \omega ) = - \log {\Bbb P}(C_\omega)$. 
If we integrate the function $h(\gamma) $, we recover the entropy (\ref{entropofpart}) of the partition $\gamma$: 
\begin{eqnarray*}
     \int_{\Omega} h(\gamma) \, {\rm d}{\Bbb P} 
         & = &   - \sum_{C \in \gamma}  \left\{ \int_{\Omega}  \mathbbm{1}_{C}   \, {\rm d} {\Bbb P} \right\} \log {\Bbb P}(C)  \\
         & = & - \sum_{C \in \gamma}  {\Bbb P}(C) \log {\Bbb P}(C)   \\
         & = & H(\gamma).  
\end{eqnarray*}
\paragraph{Conditional entropy}
Consider two finite measurable partitions $\alpha$ and $\gamma$ of $\Omega$, where we assume ${\Bbb P}(A) > 0$ for all $A \in \alpha$. The {\em conditional entropy\/}
of $\gamma$ given $\alpha$ is then defined by  
\begin{equation} \label{condentgen}
  H(\gamma  |  \alpha ) \; := \; - \sum_{A \in \alpha} {\Bbb P}(A) \sum_{C \in \gamma} {\Bbb P}(C  |  A) \log {\Bbb P}(C  |  A) .
\end{equation}
As a local version $h(\gamma | \alpha)$ of $H(\gamma | \alpha)$, we define 
\begin{equation} \label{condentloc}
     h(\gamma | \alpha) \; := \; - \sum_{C \in \gamma} \mathbbm{1}_C \log \left( \sum_{A \in \alpha} {\Bbb P}(C | A) \, \mathbbm{1}_{A}  \right).
\end{equation}
If we evaluate this function for $\omega \in \Omega$ we obtain $h(\gamma | \alpha)( \omega ) = - \log {\Bbb P}(C_\omega | A_\omega)$, 
where $A_\omega$ and $C_\omega$ are the atoms in $\alpha$ and $\gamma$, respectively,  
that contain $\omega$. Integrating $h(\gamma | \alpha)$, we recover (\ref{condentgen}):
\begin{eqnarray*}
\int_{\Omega} h(\gamma | \alpha) \, {\rm d}{\Bbb P} 
   & = & - \int_{\Omega} \log {\Bbb P}(C_\omega | A_\omega) \, {\Bbb P}({\rm d} \omega) \\
   & = &  - \sum_{A \in \alpha} \sum_{C \in \gamma} \int_{A \cap C} \log {\Bbb P}(C_\omega | A_\omega) \, {\Bbb P}({\rm d}\omega) \\
   & = &  - \sum_{A \in \alpha} \sum_{C \in \gamma}  {\Bbb P}(A \cap C) \log {\Bbb P}(C | A) \\
   & = & H(\gamma  |  \alpha ). 
\end{eqnarray*}
The function $h(\gamma | \alpha)$ can be generalised by replacing the partition $\alpha$ by an arbitrary 
$\sigma$-subalgebra $\rscr{A}$ of $\rscr{F}$: 
\begin{equation} \label{extens}
       h(\gamma | \rscr{A}) \; := \; - \sum_{C \in \gamma} \mathbbm{1}_C \log {\Bbb P}(C | \rscr{A}) ,
\end{equation}
where ${\Bbb P}(C | \rscr{A}) = {\Bbb E}(\mathbbm{1}_C | \rscr{A})$. 
Note that this function is only ${\Bbb P}$-almost everywhere defined (abbreviated as ${\Bbb P}$-a.e.). In the case where the $\sigma$-algebra $\rscr{A}$ is given by a finite partition 
$\alpha$ with ${\Bbb P}(A) > 0$ for all $A \in \alpha$, we have  
\[
       {\Bbb P}(C | \rscr{A}) \; = \; \sum_{A \in \alpha} {\Bbb P}(C | A) \, \mathbbm{1}_{A}, \qquad \mbox{${\Bbb P}$-a.e}.   
\]
This shows that the definition (\ref{extens}) is indeed an extension of (\ref{condentloc}). Correspondingly, integrating (\ref{extens}) yields a generalistaion of (\ref{condentgen}):
\begin{eqnarray*}
    H(\gamma | \rscr{A}) & := & \int_{\Omega}  h(\gamma | \rscr{A}) \, {\rm d}{\Bbb P} \\
                & = & - \sum_{C \in \gamma} \int_{\Omega}   \mathbbm{1}_C  \log  {\Bbb P}(C | \rscr{A}) \, {\rm d}{\Bbb P} \\
                & = & - \sum_{C \in \gamma} \int_{\Omega} {\Bbb P}(C | \rscr{A}) \log  {\Bbb P}(C | \rscr{A}) \, {\rm d}{\Bbb P}. 
\end{eqnarray*}

\paragraph{Mutual information}
If we subtract from the entropy of a partition $\gamma$ the conditional entropy of $\gamma$ given a partition $\alpha$, we obtain the {\em mutual information\/}: 
\begin{equation} \label{mutinf}
  I(\alpha  ;  \gamma ) \; := \; - \sum_{A \in \alpha} {\Bbb P}(A) \sum_{C \in \gamma} {\Bbb P}(C  |  A) \log \frac{{\Bbb P}(C  |  A)}{{\Bbb P}(C)} .
\end{equation}
Let us relate this function to the corresponding local functions $h(\gamma)$ and $h(\gamma | \alpha)$. 
Taking the difference, we obtain 
\begin{eqnarray*}
     i(\alpha; \gamma) 
        & := & h(\gamma) - h(\gamma | \alpha) \\
        & = &  - \sum_{C \in \gamma} \mathbbm{1}_{C} \log {\Bbb P}(C) + 
                   \sum_{C \in \gamma} \mathbbm{1}_C \log \left( \sum_{A \in \alpha} {\Bbb P}(C | A) \, \mathbbm{1}_{A}  \right) \\
        & = & \sum_{C \in \gamma} \mathbbm{1}_C \log \left( \sum_{A \in \alpha} \frac{{\Bbb P}(C | A)}{{\Bbb P}(C)} \, \mathbbm{1}_{A} \right)          
\end{eqnarray*}
If we evaluate this for $\omega \in \Omega$ we obtain $i(\alpha; \gamma)(\omega) = \log \frac{{\Bbb P}(C_\omega | A_\omega)}{{\Bbb P}(C_\omega)}$, and thus
we have 
\[
     I(\alpha  ;  \gamma ) \; = \; \int_{\Omega}  i(\alpha; \gamma) \, {\rm d}{\Bbb P}. 
\]
For the general case where the partition $\alpha$ is replaced by a $\sigma$-subalgebra $\rscr{A}$ of $\rscr{F}$, we obtain 
\begin{eqnarray*}
    i(\rscr{A}; \gamma) & := &   h(\gamma) - h(\gamma | \rscr{A}) \\
        & = &  - \sum_{C \in \gamma} \mathbbm{1}_{C} \log {\Bbb P}(C) + 
                   \sum_{C \in \gamma} \mathbbm{1}_C \log {\Bbb P}(C | \rscr{A}) \\ 
        & = & \sum_{C \in \gamma} \mathbbm{1}_C \log \frac{{\Bbb P}(C | \rscr{A})}{{\Bbb P}(C)} .
\end{eqnarray*}
This leads to a corresponding generalisation of (\ref{mutinf}):
\begin{eqnarray} \label{mutinfgen}
     I(\rscr{A}; \gamma) 
        & := & \int_\Omega  i(\rscr{A}; \gamma) \, {\rm d}{\Bbb P} \nonumber \\
        & = & \sum_{C \in \gamma} \int_\Omega   \mathbbm{1}_C \log \frac{{\Bbb P}(C | \rscr{A})}{{\Bbb P}(C)}   \, {\rm d}{\Bbb P} \nonumber \\
        & = & \sum_{C \in \gamma} \int_\Omega   {\Bbb P}(C | \rscr{A})   \log \frac{{\Bbb P}(C | \rscr{A})}{{\Bbb P}(C)}   \, {\rm d}{\Bbb P} .  \label{mutinfgen}
\end{eqnarray}

\paragraph{Conditional mutual information} \label{condmut}
Finally, we define the {\em conditional mutual information\/}. With two $\sigma$-subalgebras $\rscr{A}$ and $\rscr{B}$ of $\rscr{F}$, we define
\begin{equation} \label{locinf}
    i(\rscr{B} ; \gamma | \rscr{A}) \; := \; h(\gamma | \rscr{A}) - h(\gamma | \rscr{A} \vee \rscr{B} ) \; = \;   
               \sum_{C \in \gamma} \mathbbm{1}_C \log \frac{{\Bbb P}(C | \rscr{A} \vee \rscr{B})}{{\Bbb P}(B | \rscr{A})} .
\end{equation}
Integration of this function leads to 
\begin{eqnarray} 
   I(\rscr{B}; \gamma | \rscr{A}) 
      & = & \int_\Omega  i(\rscr{B} ; \gamma | \rscr{A}) \, {\rm d}{\Bbb P} \nonumber \\
      & = &  \sum_{C \in \gamma}  \int_\Omega  \mathbbm{1}_C \log \frac{{\Bbb P}(C | \rscr{A} \vee \rscr{B})}{{\Bbb P}(C | \rscr{A})} \, {\rm d}{\Bbb P} \nonumber \\
      & = & \sum_{C \in \gamma}  \int_\Omega   {\Bbb P}(C | \rscr{A} \vee \rscr{B}) \log \frac{{\Bbb P}(C | \rscr{A} \vee \rscr{B})}{{\Bbb P}(C | \rscr{A})} \, {\rm d}{\Bbb P}.
                \label{condmutinfgen}
\end{eqnarray}
In a final step, we could further extend $i(\rscr{B} ; \gamma | \rscr{A})$ and $I(\rscr{B} ; \gamma | \rscr{A})$ to the case where $\gamma$ is replaced by a 
$\sigma$-algebra $\rscr{C}$, by taking the supremum over all finite partitions $\gamma$ in $\rscr{C}$. However, in this article 
we restrict attention to a fixed finite partitions $\gamma$. 

\section{The chain rule as a guiding scheme} 
\subsection{Two inputs} \label{twoinp}
In the introduction, Section \ref{intro}, 
we have used the two-input case for discrete random variables in order to highlight the main issue with the classical definitions of the mutual information and the 
conditional mutual information and to outline the core idea of this article. 
After having introduced the required information-theoretic quantities for more general variables in Section \ref{geninftheo}, 
we now revisit the instructive two-input case and demonstrate how measure-theoretic concepts come into play here very naturally.    
\medskip

Consider measurable spaces $(\mathsf{X},\rscr{X})$, $(\mathsf{Y},\rscr{Y})$, $(\mathsf{Z},\rscr{Z})$, and their product 
\[
  (\Omega, \rscr{F}) \; := \; (\mathsf{X} \times \mathsf Y \times \mathsf{Z}, \rscr{X} \otimes \rscr{Y} \otimes \rscr{Z}). 
\]  
In order to ensure the existence of various (regular versions of) conditional distributions, we need to assume that these measurable spaces carry 
a further structure. Typically, it is sufficient to require that $(\mathsf{X},\rscr{X})$, $(\mathsf{Y},\rscr{Y})$, and $(\mathsf{Z},\rscr{Z})$ 
are Polish spaces (see \cite{Du02}, Theorem 13.1.1), which will be implicitly assumed hereinafter for all measurable spaces.   \\

Now, consider  
a probability measure $\mu$ on $(\mathsf{X} \times \mathsf{Y}, \rscr{X} \otimes \rscr{Y})$
and a Markov kernel 
\[
   \nu: \mathsf{X} \times \mathsf{Y} \times \rscr{Z} \to [0,1],
\]
which models a channel that takes two inputs, $x \in \mathsf{X}$ and $y \in \mathsf{Y}$, and generates a possibly random output $z \in \mathsf{Z}$. 
This allows us to define a probability measure on the joint space $(\Omega, \rscr{F})$, given by 
\[
     {\Bbb P}(A \times B \times C) \, := \, \int_{A \times B} \nu(x,y; C) \, \mu({\rm d}x,{\rm d}y).
\] 
With the natural projections 
\[
    X : \Omega \to \mathsf{X}, \;\; (x,y,z) \mapsto x, \qquad Y : \Omega \to \mathsf{Y}, \;\; (x,y,z) \mapsto y, \qquad 
    Z : \Omega \to \mathsf{Z}, \;\; (x,y,z) \mapsto z ,
\] 
we have 
\begin{eqnarray}
     \mu(A \times B) & = & {\Bbb P}(X \in A, Y \in B), \label{inpvert} \\
     \nu(x,y; C)         & = & {\Bbb P}(Z \in C | X = x, Y =  y). \label{bedvert}
\end{eqnarray}
Furthermore, we have the marginals 
\begin{eqnarray}
   \mu_X(A) & := & \mu(A \times \mathsf{Y}) \; = \; {\Bbb P}( X \in A ), \\
   \mu_Y (B) & := & \mu(\mathsf{X} \times B) \; = \; {\Bbb P}( Y  \in B ), 
\end{eqnarray}
and, finally, the $\nu$-push-forward measure of $\mu$, 
\begin{equation}
     \mu_\ast (C) \, := \, {\Bbb P}(Z \in C).
\end{equation}

Note that the definition of the conditional distribution ${\Bbb P}(Z \in C | X = x, Y =  y)$ on the RHS of (\ref{bedvert}) is quite general and does not exclude cases where 
${\Bbb P}(X = x, Y= y) = 0$. It requires a formalism that goes beyond the context of variables with finitely many state sets 
$\mathsf{X}$, $\mathsf{Y}$, and $\mathsf{Z}$. It is important to outline this formalism in some detail here. It will provide the basis   
for an appropriate definition of marginal channels. The definition of the conditional distribution 
\begin{equation} \label{conddistr}
     {\Bbb P}(Z \in C | X = x, Y =  y)
\end{equation}
involves two steps: 
\begin{enumerate}
\item[Step 1] We interpret the indicator function $\mathbbm{1}_{\{Z \in C\}}$ as an element of the Hilbert space 
$L^2(\Omega, \rscr{F}, {\Bbb P})$ and project it onto the (closed) linear subspace of $(X,Y)$-measurable functions $\Omega \to {\Bbb R}$.
Its projection is referred to as {\em conditional expectation\/} and denoted by  
\begin{equation} \label{condexpectxy}
           {\Bbb E}( \mathbbm{1}_{\{Z \in C\}} |  X,Y ).
\end{equation}    
Note that the elements of the Hilbert space $L^2(\Omega, \rscr{F}, {\Bbb P})$ are equivalence classes of functions where two functions are identified if they  
coincide on a measurable set of probability one. 
Therefore, the conditional expectation (\ref{condexpectxy}) is almost surely well defined. 
\item[Step 2] Formally, ${\Bbb E}( \mathbbm{1}_{\{Z \in C\}} |  X,Y )$ is a real-valued function defined on $\Omega$. On the other hand, it is $(X,Y)$-measurable so that we should be 
able to interpret it as a function of $x$ and $y$. Indeed, it follows from the factorisation lemma 
that there is a unique measurable function $\varphi_C: (\mathsf{X} \times \mathsf{Y}, \rscr{X} \otimes \rscr{Y}) \to {\Bbb R}$ 
satisfying ${\Bbb E}( \mathbbm{1}_{\{Z \in C\}} |  X , Y ) = \varphi_C \circ (X, Y)$.  
The conditional distribution (\ref{conddistr}) is then simply defined to be the function $\varphi_C$, which has $x$ and $y$ as arguments. 
In the special situation where we start with a Markov kernel $\nu$, we recover it in terms of equation (\ref{bedvert}). It turns out that this equation already describes 
a quite general situation. Under mild conditions, assuming, for instance, that all measurable spaces are Polish spaces, 
the conditional distribution (\ref{conddistr}) can be considered to be a Markov kernel, as a function of $x$, $y$, {\em and\/} $C$. 
\end{enumerate}

For the definition of mutual information and conditional mutual information, as generalisations of (\ref{distbetweenchannels}) and (\ref{kldiv}), respectively, we have 
to find an appropriate notion of a marginal kernel. We begin with the conditional distribution ${\Bbb P}(Z \in C | X = x)$, as generalisation of $p(z | x)$. For its evaluation 
we repeat the arguments of the above two steps and consider the conditional expectation  
\begin{equation} \label{condexpect}
           {\Bbb E}( \mathbbm{1}_{\{Z \in C\}} |  X  ). 
\end{equation}      
This is an $X$-measurable random variable $\Omega \to {\Bbb R}$. 
By the factorisation lemma we have  
a unique measurable function $\nu_X( \cdot ;C) : (\mathsf{X}, \rscr{X}) \to {\Bbb R}$ 
satisfying ${\Bbb E}( \mathbbm{1}_{\{Z \in C\}} |  X  ) =  \nu_X(X ; C)$, and we set
\[
    {\Bbb P}(Z \in C | X = x) \; := \; \nu_X(x;C).
\] 
Under mild conditions we can assume that $\nu_X(x;C)$ defines a Markov kernel when considered as a function $\nu_X: \mathsf{X} \times \rscr{Z} \to [0,1]$ 
in $x$ {\em and\/} $C$.  
\medskip
 
We can now easily extend the classical definitions of mutual information and conditional mutual information to the 
context of this section. For a finite measurable partition $\gamma$ of $\mathsf{Z}$ we can use (\ref{mutinfgen}) to define the mutual informations 
\begin{eqnarray}
   I_\gamma(X, Y; Z)  
      & := & I( \sigma(X) \vee \sigma(Y) ; \gamma)  \nonumber \\
      & = & \sum_{C \in \gamma} \int_{\mathsf{X}\times Y} \nu(x , y ; C) \log \frac{\nu(x, y ; C)}{\mu_\ast(C)} \, \mu({\rm d} x, {\rm d} y) , 
\end{eqnarray}
and 
\begin{eqnarray}   
   I_\gamma(X; Z)    & := & I(\sigma(X) ; \gamma) \nonumber \\
               & = & \sum_{C \in \gamma} \int_{\mathsf{X}} \nu_X(x ; C) \log \frac{\nu_X(x ; C)}{\mu_\ast(C)} \, \mu_X({\rm d} x) . \label{causalv}
\end{eqnarray}
Furthermore, with (\ref{condmutinfgen}) we define the conditional mutual information   
\begin{eqnarray} \label{condmutinfxyz}
   I_\gamma(Y; Z | X) 
      & := & I(\sigma(Y) ; \gamma | \sigma (X)) \nonumber \\
      & = & \sum_{C \in \gamma} \int_{\mathsf{X} \times \mathsf{Y}} \nu(x, y;C) \log \frac{\nu(x,y; C)}{\nu_X(x ; C)} \, \mu({\rm d} x , {\rm d} y) . 
\end{eqnarray}
We repeat the computation (\ref{decompoln}) and 
decompose the mutual information $I_\gamma(X, Y; Z)$ as follows:
\begin{eqnarray} \label{chainr}
    I_\gamma(X, Y; Z) 
       & = &  \sum_{C \in \gamma} \int_{\mathsf{X}\times Y} \nu(x , y ; C) \log \frac{\nu(x, y ; C)}{\mu_\ast(C)} \, \mu({\rm d} x, {\rm d} y) \\  
       & = &  \sum_{C \in \gamma} \int_{\mathsf{X}\times Y} \nu(x , y ; C) \left[  \log \frac{ \nu_X(x ; C) }{\mu_\ast(C)} + 
                  \log \frac{\nu(x, y ; C)}{ \nu_X(x ; C) } \right] \, \mu({\rm d} x, {\rm d} y) \label{divisi}\\  
       & = & I_\gamma(X; Z)  +  I_\gamma(Y; Z | X).  \label{chainr2}
\end{eqnarray}
We argue that, in order to have a causal decomposition 
of the full mutual information $I_\gamma(X, Y; Z)$ into two terms similar to $I_\gamma(X; Z)$ and $I_\gamma(Y; Z | X)$,  
we have to modify the marginal channel 
\begin{equation} \label{condi}
\nu_X(x ; C) \, = \, {\Bbb P}(Z \in C | X = x) 
\end{equation}
in (\ref{divisi}). In this modification, the conditioning with respect to $X$ has to be adjusted to the actual information used by the kernel $\nu(x,y ; C)$. To this end, 
we consider the smallest $\sigma$-subalgebra $\rscr{A}_{X, \bar{y}}$ of ${\rscr X}$ 
for which all constrained Markov kernels $\nu_{X,\bar{y}} (\cdot ; C) := \nu(\cdot, \bar{y} ; C)$, $C \in \rscr{Z}$, 
are measurable. It corresponds to the partition $\alpha_{X,\bar{y}}$
that appears in (\ref{partitionalpha}). 
Now we generalise the definition (\ref{partitionalpha}) of the partition $\alpha_X$ 
by combining all the $\sigma$-algebras $\rscr{A}_{X , \bar{y}}$: 
\begin{equation}
      \rscr{A}_X \, := \,  \bigvee_{\bar{y} \in \mathsf{Y}} \rscr{A}_{X , \bar{y}} \; \subseteq \; \rscr{X}. 
\end{equation}
Note that this $\sigma$-algebra is typically not contained in the $\sigma$-algebra generated by the channel $\nu$, that is the smallest $\sigma$-subalgebra in 
$\rscr{X} \otimes \rscr{Y}$ for which $(x,y) \mapsto \nu(x,y ; C)$ is measurable for all $C \in \rscr{Z}$. This is illustrated by the following example.  

\begin{example} \label{notsubalgebra}
We consider 
\[
     (\mathsf{X}, \rscr{X}) \; = \; (\mathsf{Y}, \rscr{Y}) \; = \; (\mathsf{Z}, \rscr{Z}) \; = \; ({\Bbb R}, \rscr{B}({\Bbb R})),
\]
where $\rscr{B}({\Bbb R})$ denotes the Borel $\sigma$-algebra of ${\Bbb R}$. We assume that the channel $\nu$ is simply given by the addition $(x,y) \mapsto x + y$:
\[
    \nu(x,y; C) \; = \; \mathbbm{1}_{C}(x + y). 
\] 
As $\rscr{B}({\Bbb R})$ is generated by the intervals $[r - \varepsilon, r + \varepsilon] \subseteq {\Bbb R}$, the   
smallest $\sigma$-algebra $\rscr{A} \subseteq \rscr{X} \otimes \rscr{Y}$ for which all functions $\nu(\cdot,\cdot ; C)$ are measurable is generated 
by the following sets 
\[
     A ({r, \varepsilon}) \; := \;
                   \left\{(x,y) \in {\Bbb R}^2 \; : \;  r - \varepsilon \leq x + y \leq r + \varepsilon \right\}, \qquad r \in {\Bbb R}, \; \varepsilon \in {\Bbb R}_{+}. 
\]
Now let us consider $\rscr{A}_{X,\bar{y}}$, the $\sigma$-algebra generated by the kernel 
\[
    \nu_{X,\bar{y}}: \; {\Bbb R} \times \rscr{B}({\Bbb R}) \; \to \; [0,1],  \qquad (x,C) \; \mapsto \; \nu_{X, \bar{y}}(x;C) := \nu(x,{\bar{y}} ; C).  
\]
It is easy to see that $\rscr{A}_{X , \bar{y}}$ is the smallest $\sigma$-subalgebra of $\rscr{X}$ containing the $\bar{y}$-sections
\[
    A_{X, \bar{y}} (r, \varepsilon) \; := \; \left\{ x \in {\Bbb R} \; : \; (x, \bar{y}) \in A (r, \varepsilon ) \right\} \; = \; 
                                            \left\{x \in {\Bbb R} \; : \;  r - \bar{y} - \varepsilon \leq x \leq r  - \bar{y} + \varepsilon \right\}
\]
This example shows that the cylinder sets $A \times {\Bbb R}$, $A \in \rscr{A}_{X , \bar{y}}$, are not necessarily contained in $\rscr{A}$.  
\begin{figure}[h]
\begin{center}
           \includegraphics[width=12cm]{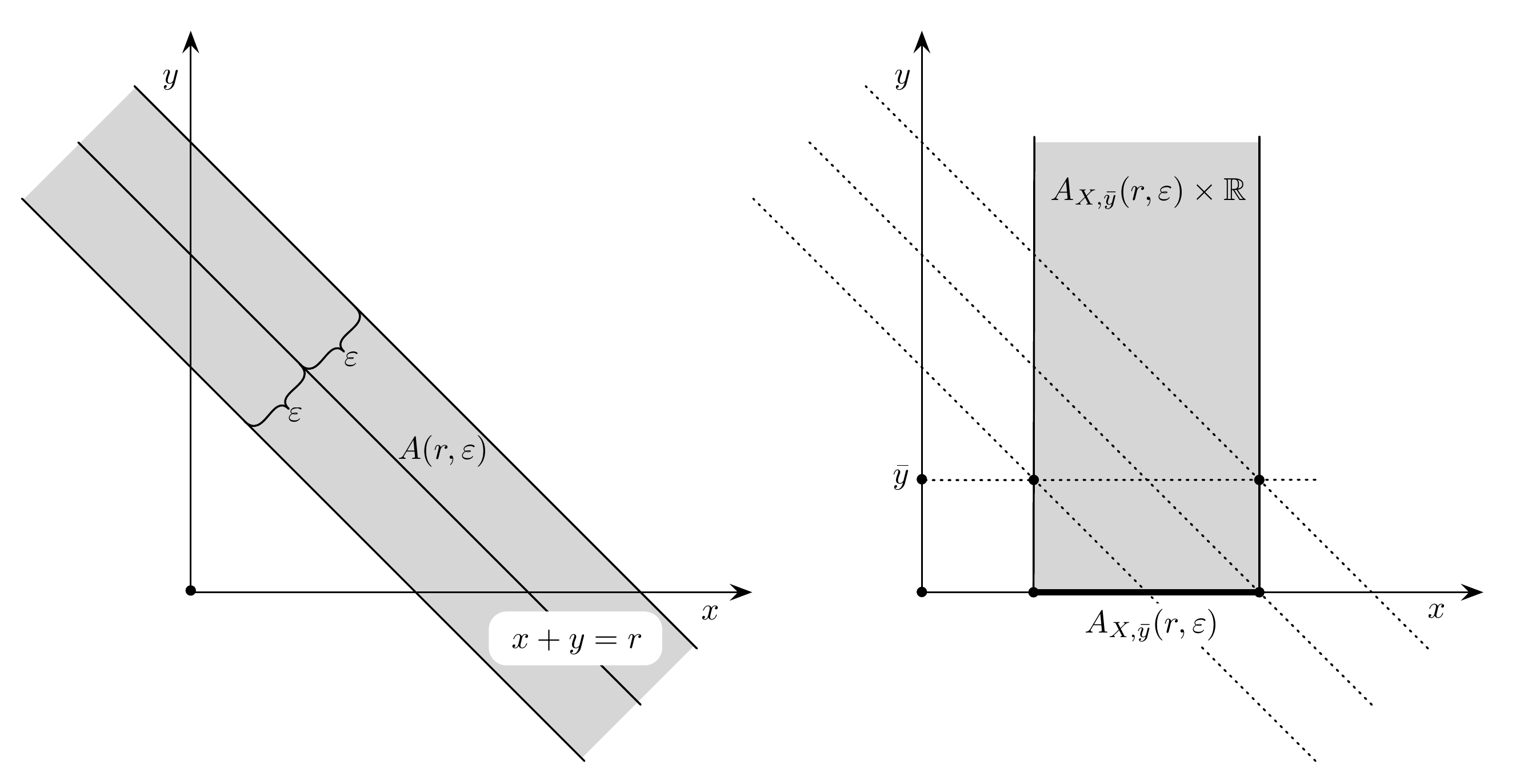}
\caption{Illustration of the $\nu$-measurable sets in ${\Bbb R}^2$ and their sections in ${\Bbb R}$.}  
\end{center}
\end{figure}   
\end{example}

With the $\sigma$-subalgebra $\rscr{A}_X$ of $\rscr{X}$, we can now modify the random variable $X: (\Omega, \rscr{F}, {\Bbb P}) \to (\mathsf{X}, \rscr{X})$ 
by simply reducing the image $\sigma$-algebra to $\rscr{A}_X$:
\[
    \widehat{X} : \; (\Omega, \rscr{F}, {\Bbb P}) \; \to \; (\mathsf{X}, \rscr{A}_X).  
\]
We will see that this step is crucial here, even though it might appear like a minor technical step at first sight.  
It allows us to modify (\ref{condexpect}) by replacing the full $\sigma$-algebra of $X$, $\rscr{X}$, by the 
$\sigma$-algebra of $\widehat{X}$:
\begin{equation} \label{condexpect2} 
      {\Bbb E}( \mathbbm{1}_{\{Z \in C\}} | \widehat{X}).
\end{equation}
This is, by definition, an $\widehat{X}$-measurable random 
variable $\Omega \to {\Bbb R}$. By the factorisation lemma, we can find a unique measurable function 
$\hat{\nu}_X(\cdot ; C) : (\mathsf{X}, \rscr{A}_X) \to {\Bbb R}$ satisfying 
$ {\Bbb E}( \mathbbm{1}_{\{Z \in C\}} | \widehat{X} )=  \hat{\nu}_X( \widehat{X} ; C)$. This yields a new marginal channel,
\[
   {\Bbb P}(Z \in C | \widehat{X} = x) \; := \; \hat{\nu}_X(x ; C)  ,
\] 
as a modification of $\nu_X(x ; C)$ which appears twice in (\ref{divisi}). 
Note that the kernel $\hat{\nu}_X(x;C)$ is defined 
almost surely. However, the definition of a conditional mutual information will be independent of the version of that kernel. 
\medskip

Now we come to the definition of a causal version of the mutual information (\ref{causalv}) and the conditional mutual information (\ref{condmutinfxyz}). 
We simply replace in these definitions $\nu_X(x ; C)$ by $\hat{\nu}_X(x ; C)$:
\begin{eqnarray} 
   I_\gamma(X \to Z)
   & := & I(\sigma(\widehat{X}) ; \gamma) \nonumber \\
   & = & \sum_{C \in \gamma} \int_{\mathsf{X}} \hat{\nu}_X(x ; C) \log \frac{\hat{\nu}_X(x ; C)}{\mu_\ast(C)} \, \mu_X({\rm d} x) , \label{causversmuin} 
\end{eqnarray}
\begin{eqnarray}   
   I_\gamma(Y \to Z | X) 
   & := & I(\sigma(Y) ; \gamma | \sigma(\widehat{X}) ) \nonumber   \\
   & = & \sum_{C \in \gamma} \int_{\mathsf{X} \times \mathsf{Y}} \nu(x, y;C) \log \frac{\nu(x,y; C)}{\hat{\nu}_X(x ; C)} \, \mu({\rm d} x , {\rm d} y) . \label{condmutinfxyz2} 
\end{eqnarray}
The following proposition relates the causal quantities (\ref{causversmuin}) and (\ref{condmutinfxyz2})  
to the corresponding non-causal ones, (\ref{causalv}) and (\ref{condmutinfxyz}). 

\begin{proposition} \label{firstchain}
We have the chain rule 
\begin{equation} \label{chainnew}
 I_\gamma(X,Y;Z) \; = \; I_\gamma(X \to Z) + I_\gamma(Y \to Z | X). 
\end{equation}
Furthermore,
\begin{equation} \label{compa}
      I_\gamma(X \to Z) \; \leq \; I_\gamma(X ; Z), \qquad I_\gamma(Y \to Z | X) \; \geq \; I_\gamma(Y ; Z | X).
\end{equation}
\end{proposition}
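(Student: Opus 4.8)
The plan is to prove the chain rule (\ref{chainnew}) by the same logarithm‑splitting as in (\ref{divisi}), but with $\nu_X$ replaced by $\hat{\nu}_X$, and then to obtain both inequalities in (\ref{compa}) from a single application of Jensen's inequality. Throughout, since $\gamma$ is finite and all kernels are bounded by $1$ (so that $t\log t\in[-1/e,0]$ for $t\in[0,1]$), every integral below lies in a bounded range; hence all quantities in the statement are finite and no extended‑real subtleties arise.

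\emph{Chain rule.} Inserting $\hat{\nu}_X(x;C)$ into the logarithm in the definition of $I_\gamma(X,Y;Z)$,
\[
 I_\gamma(X,Y;Z)\;=\;\sum_{C\in\gamma}\int_{\mathsf{X}\times\mathsf{Y}}\nu(x,y;C)\left[\log\frac{\hat{\nu}_X(x;C)}{\mu_\ast(C)}+\log\frac{\nu(x,y;C)}{\hat{\nu}_X(x;C)}\right]\mu({\rm d}x,{\rm d}y),
\]
where the second bracketed term integrates, by definition (\ref{condmutinfxyz2}), to $I_\gamma(Y\to Z\mid X)$. For the first term I would use the tower property: since $\sigma(\widehat{X})\subseteq\sigma(X)\subseteq\sigma(X)\vee\sigma(Y)$, the map $(x,y)\mapsto\log\frac{\hat{\nu}_X(x;C)}{\mu_\ast(C)}$ is $\sigma(X)\vee\sigma(Y)$‑measurable, so, using $\nu(X,Y;C)={\Bbb E}(\mathbbm{1}_{\{Z\in C\}}\mid X,Y)$,
\[
 \int_{\mathsf{X}\times\mathsf{Y}}\nu(x,y;C)\log\frac{\hat{\nu}_X(x;C)}{\mu_\ast(C)}\,\mu({\rm d}x,{\rm d}y)\;=\;{\Bbb E}\Big[\mathbbm{1}_{\{Z\in C\}}\log\frac{\hat{\nu}_X(\widehat{X};C)}{\mu_\ast(C)}\Big];
\]
conditioning once more on $\widehat{X}$ and using $\hat{\nu}_X(\widehat{X};C)={\Bbb E}(\mathbbm{1}_{\{Z\in C\}}\mid\widehat{X})$ turns the right side into $\int_{\mathsf{X}}\hat{\nu}_X(x;C)\log\frac{\hat{\nu}_X(x;C)}{\mu_\ast(C)}\,\mu_X({\rm d}x)$, the $C$‑summand of $I_\gamma(X\to Z)$ in (\ref{causversmuin}). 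Summing over $C\in\gamma$ gives (\ref{chainnew}).

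\emph{Inequalities.} The single estimate I would isolate is: for each $C\in\gamma$,
\[
 {\Bbb E}\big[\nu_X(X;C)\log\nu_X(X;C)\big]\;\ge\;{\Bbb E}\big[\hat{\nu}_X(\widehat{X};C)\log\hat{\nu}_X(\widehat{X};C)\big].
\]
This holds because $\sigma(\widehat{X})\subseteq\sigma(X)$ gives $\hat{\nu}_X(\widehat{X};C)={\Bbb E}(\nu_X(X;C)\mid\widehat{X})$ ${\Bbb P}$‑a.e.\ (tower property again), so the conditional Jensen inequality applied to the convex function $t\mapsto t\log t$ yields it after taking expectations. Now, using $\int\nu_X(x;C)\,\mu_X({\rm d}x)=\int\hat{\nu}_X(x;C)\,\mu_X({\rm d}x)={\Bbb E}\,\mathbbm{1}_{\{Z\in C\}}=\mu_\ast(C)$, the $-\log\mu_\ast(C)$ normalisers cancel and $I_\gamma(X;Z)-I_\gamma(X\to Z)$ equals the sum over $C$ of the left minus the right side above, hence is $\ge 0$. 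For the second inequality, conditioning on $X$ resp.\ $\widehat{X}$ gives $\int\nu(x,y;C)\log\nu_X(x;C)\,{\rm d}\mu={\Bbb E}[\nu_X(X;C)\log\nu_X(X;C)]$ and $\int\nu(x,y;C)\log\hat{\nu}_X(x;C)\,{\rm d}\mu={\Bbb E}[\hat{\nu}_X(\widehat{X};C)\log\hat{\nu}_X(\widehat{X};C)]$, so $I_\gamma(Y\to Z\mid X)-I_\gamma(Y;Z\mid X)$ also equals the sum over $C$ of the left minus the right side of the displayed estimate, hence is $\ge 0$. (Consistently, the two inequalities and the two chain rules (\ref{chainr2}) and (\ref{chainnew}) all fit together.)

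\emph{Main obstacle.} I expect the only real care to be measure‑theoretic bookkeeping rather than anything deep: one must keep in mind that $\widehat{X}$ and $X$ are the same map into different $\sigma$‑algebras, so that $\sigma(\widehat{X})$ genuinely is a sub‑$\sigma$‑algebra of $\sigma(X)$ and the push‑forwards of ${\Bbb P}$ along the two coincide when restricted to $\rscr{A}_X$‑measurable integrands; that $\hat{\nu}_X(\,\cdot\,;C)$ is only ${\Bbb P}$‑a.e.\ defined, so every identity must be read a.e.\ and all integrals are version‑independent; and that the convention $0\log 0=0$ together with the a.e.\ vanishing of $\nu(X,Y;C)$ on $\{\nu_X(X;C)=0\}$ (and on $\{\mu_\ast(C)=0\}$) makes the various $\log 0$ terms harmless. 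Stating the elementary identity $\int_{\mathsf{X}}\hat{\nu}_X(x;C)\,\mu_X({\rm d}x)={\Bbb E}\,\mathbbm{1}_{\{Z\in C\}}=\mu_\ast(C)$ explicitly is what legitimises the cancellation of normalisers in both inequalities.
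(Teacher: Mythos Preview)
Your proof is correct and follows essentially the same route as the paper: the chain rule via the logarithm split together with the tower property $\int f g\,{\rm d}{\Bbb P}=\int {\Bbb E}(f\mid\rscr{A})\,g\,{\rm d}{\Bbb P}$ for $\rscr{A}$-measurable $g$, and the inequalities via conditional Jensen applied to a convex function of ${\Bbb P}(Z\in C\mid X)$. The only cosmetic differences are that the paper uses $\phi(r)=r\log(r/\mu_\ast(C))$ directly (so no separate cancellation of normalisers is needed) and then deduces the second inequality from the first together with the two chain rules, whereas you compute both gaps explicitly and observe they coincide; these are equivalent formulations of the same argument.
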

\begin{proof} With $C_z$ denoting the set in $\gamma$ that contains $z$,
we have 
\[
   \log \frac{\nu(x,y ; C_z)}{\mu_\ast(C_z)} \; = \; 
   \log \frac{\nu(x,y ; C_z)}{\hat{\nu}_X(x ; C_z)} + \log \frac{\hat{\nu}_X(x ; C_z)}{\mu_\ast(C_z)}.
\]
Integrating this with respect to $\nu(x,y; {\rm d} z)$ we get
\begin{eqnarray}
 \lefteqn{\int_{\mathsf{Z}}  \log \frac{\nu(x,y ; C_z)}{\mu_\ast(C_z)} \, \nu(x,y; {\rm d} z) } \nonumber \\
       & = &  \int_{\mathsf{Z}}  \log \frac{\nu(x,y ; C_z)}{\hat{\nu}_X(x ; C_z)}  \, \nu(x,y; {\rm d} z)  + 
                  \int_{\mathsf{Z}}  \log \frac{\hat{\nu}_X(x ; C_z)}{\mu_\ast(C_z)} \, \nu(x,y; {\rm d} z) \nonumber \\
       & = &  \sum_{C \in \gamma} \nu(x,y ; C)  \log \frac{ \nu(x,y ; C)}{\hat{\nu}_X (x ; C)}  + 
                  \sum_{C \in \gamma} \nu(x,y ; C)  \log \frac{\hat{\nu}_X(x ; C)}{\mu_\ast(C)} .  \label{decomp}
\end{eqnarray}
Further integrating the first term of (\ref{decomp}) with respect to $\mu$ gives us $I_\gamma(Y \to Z | X)$ (see (\ref{condmutinfxyz2})). 
For the corresponding integration of the second term, we obtain
\begin{eqnarray} \label{ineg}
       \lefteqn{\sum_{C \in \gamma} \int_{\mathsf{X} \times \mathsf{Y}} \nu(x,y ; C)  \log \frac{\hat{\nu}_X(x ; C)}{\mu_\ast(C)} \, \mu({\rm d} x , {\rm d} y)} \qquad \qquad \nonumber \\
       & = & \sum_{C \in \gamma} \int_{\Omega} {\Bbb P}(Z \in C | X, Y)  \log \frac{{\Bbb P}(Z \in C | \widehat{X})}{{\Bbb P}(Z \in C)} \, {\rm d} {\Bbb P} \label{step1} \\
       & = & \sum_{C \in \gamma} \int_{\Omega} {\Bbb P}(Z \in C | \widehat{X})  \log \frac{{\Bbb P}(Z \in C | \widehat{X})}{{\Bbb P}(Z \in C)} \, {\rm d} {\Bbb P}  \label{gener} \\
       & = & \sum_{C \in \gamma} \int_{\mathsf{X}} \hat{\nu}_X(x ; C)  \log \frac{\hat{\nu}_X(x ; C)}{\mu_\ast(C)} \, \mu_X({\rm d} x) .   \label{step2}    
\end{eqnarray}
The crucial step (\ref{gener}) follows from the general property of the conditional expectation of a function $f$ with respect to a $\sigma$-subalgebra ${\rscr A}$:
\[
       \int_\Omega f g \, {\rm d}{\Bbb P} = \int_\Omega {\Bbb E}(f | {\rscr A}) g \, {\rm d}{\Bbb P}, \qquad \mbox{for all ${\rscr A}$-measurable functions $g$}.
\]       
Here, $f$ is given by ${\Bbb P}(Z \in C | X, Y)$, $\rscr{A}$ is the $\sigma$-algebra generated by $\widehat{X}$, and $g$ is given by 
$ \log \frac{{\Bbb P}(Z \in C | \widehat{X})}{{\Bbb P}(Z \in C)}$ which is $\widehat{X}$-measurable. The steps (\ref{step1}) and (\ref{step2}) follow directly from the definitions of the
Markov kernels, and we finally obtain $I_\gamma(X \to Z)$ (see (\ref{causversmuin})). This concludes the proof of the chain rule (\ref{chainnew}). 

We now prove the   
inequalities (\ref{compa}) where we can restrict attention to the first one. 
We consider the convex function $\phi(r) := r \log \frac{r}{\mu_\ast (C)}$, for $r > 0$, and $\phi(0) := 0$, and apply Jensen's inequality for conditional expectations: 
\begin{equation} \label{Jensen}
   {\Bbb E} \left( \phi\big( {\Bbb P}(Z \in C | X ) \big) \Big|    \widehat{X} \right) \; \geq \; \phi \left( {\Bbb E} \left( {\Bbb P}(Z \in C | X )  \Big|    \widehat{X} \right) \right)
   \; = \; \phi \left( {\Bbb P}(Z \in C | \widehat{X} )  \right)
\end{equation}
This implies 
\begin{eqnarray*}
   I_\gamma (X ; Z) 
   & = & \sum_{C \in \gamma} \int_{\Omega} {\Bbb P}(Z \in C | {X})  \log \frac{{\Bbb P}(Z \in C | {X})}{{\Bbb P}(Z \in C)} \, {\rm d} {\Bbb P} \\
   & = &  \sum_{C \in \gamma} {\Bbb E} \left( \phi\big( {\Bbb P}(Z \in C | {X})  \big) \right)  \\
   & = &  \sum_{C \in \gamma} {\Bbb E} \left( {\Bbb E} \left( \phi\big( {\Bbb P}(Z \in C | {X})  \big)  \Big| \widehat{X}      \right)     \right)  \\
   & \geq &  \sum_{C \in \gamma}  {\Bbb E} \left(  \phi \left( {\Bbb P}(Z \in C | \widehat{X} )  \right) \right) \qquad \mbox{(by (\ref{Jensen}))}  \\
   & = & \sum_{C \in \gamma} \int_{\Omega} {\Bbb P}(Z \in C | \widehat{X})  \log \frac{{\Bbb P}(Z \in C | \widehat{X})}{{\Bbb P}(Z \in C)} \, {\rm d} {\Bbb P} \\
   & = & I_\gamma( X \to Z) .
\end{eqnarray*}
The second inequality in (\ref{compa}) follows from the first one and the chain rule (\ref{chainnew}). 
\end{proof}

Let us interpret this result. The first inequality in (\ref{compa}) highlights the fact that the stochastic dependence between $X$ and $Z$, here quantified by the 
usual mutual information $I_\gamma(X ; Z)$, cannot be fully attributed to the causal effect of $X$ on $Z$. Some part of $I_\gamma(X ; Z)$ is purely associational, and  $I_\gamma(X \to Z)$ constitutes the causal part of it. 
The second inequality in (\ref{compa}) highlights a different fact. Conditioning on the variable $X$ may ``screen off'' 
some part of the causal effect of $Y$ on $Z$. More precisely, the uncertainty reduction about the outcome of $Z$ through $X$ can be so strong that a further reduction through $Y$ becomes ``invisible''. Therefore, the classical conditional mutual information, $I_\gamma(Y; Z | X)$, tends to reflect only part of the 
causal influence of $Y$ on $Z$ given $X$, $I_\gamma(Y \to Z | X)$. Even though the classical information-theoretic quantities are 
replaced by their causal versions, the full mutual information can still be decomposed 
according to the chain rule (\ref{chainnew}). However, in comparison to the decomposition (\ref{chainr2}), some amount of it is shifted from one term to the other  
so that both terms can be interpreted causally. 
\medskip

It turns out, that the definitions (\ref{causversmuin}) and (\ref{condmutinfxyz2}) require a careful extension if we want to have a general chain 
rule for more than two input variables. 
We are now going to highlight this for three input variables.

\subsection{Three inputs} \label{threeinp}
We now consider three input variables. This will reveal that the previous case with two input variables is quite special. An extension to more 
than two variables requires an adjustment of our definition of causal information flow.   
\medskip

We consider a third input variable (denoted below by $W$) with values in a measurable space $(\mathsf{W}, \rscr{W})$, a probability measure 
\[
   \mu \qquad \mbox{on} \qquad 
  (\mathsf{W} \times \mathsf{X} \times \mathsf{Y}, \rscr{W} \otimes \rscr{X} \otimes \rscr{Y}), 
\]  
and an input-output channel, given by a Markov kernel 
\[
   \nu: \; \mathsf{W} \times \mathsf{X} \times \mathsf{Y} \times \rscr{Z} \; \to \; [0,1]. 
\]
This gives rise to a probability space, consisting of the measurable space 
\[
    (\Omega , \rscr{F}) \,  := \, (\mathsf{W} \times \mathsf{X} \times \mathsf{Y} \times \mathsf{Z} , \rscr{W} \otimes \rscr{X} \otimes \rscr{Y} \otimes \rscr{Z}),
\]
and the probability measure ${\Bbb P}$ defined by
\[
     {\Bbb P}(A \times B \times C \times D) \, := \, \int_{A \times B \times C} \nu(w,x,y ; D) \, \mu({\rm d}w, {\rm d}x, {\rm d}y). 
\] 
Finally, we have the natural projections $W: \Omega \to \mathsf{W}$, $X: \Omega \to \mathsf{X}$, $Y: \Omega \to \mathsf{Y}$, and $Z: \Omega \to \mathsf{Z}$. 
\medskip

The definition of the marginal kernel $\hat{\nu}_X(x ; C)$, as introduced in Section \ref{twoinp}, is directly applicable to the situation of three input variables. 
It allows us to define marginal channels by an appropriate grouping of two input variables into one input variable, which formally reduces 
the three-input case to a two-input case. In particular, we can define the channels $\hat{\nu}_{W,X}(w, x ; C)$ and $\hat{\nu}_W(w ; C)$, 
by grouping $W,X$ and $X,Y$, respectively, into one variable.
Denoting the set in $\gamma$ that contains $z$ by $C_z$, we then have
\[
   \log \frac{\nu(w,x,y ; C_z)}{\mu_\ast(C_z)} \; = \; 
   \log \frac{\nu(w,x,y ; C_z)}{\hat{\nu}_{W,X}(w, x ; C_z)} + \log \frac{\hat{\nu}_{W,X}(w, x ; C_z)}{\hat{\nu}_W(w ; C_z)} + 
   \log \frac{\hat{\nu}_W(w ; C_z)}{ \mu_\ast(C_z)} .
\]
By integration we obtain 
\begin{eqnarray}
\lefteqn{\int_{\mathsf{Z}}  \log \frac{\nu(w,x,y ; C_z)}{\mu_\ast(C_z)} \, \nu(w, x,y; {\rm d} z) }  \nonumber \\
       & = &  \int_{\mathsf{Z}} \log \frac{\nu(w,x,y ; C_z)}{\hat{\nu}_{W,X}(w, x ; C_z)} \, \nu(w,x,y; {\rm d} z)  + \nonumber \\
       &    &    \qquad \qquad   \qquad \qquad    \int_{\mathsf{Z}} \log \frac{\hat{\nu}_{W,X}(w, x ; C_z)}{\hat{\nu}_W(w ; C_z)} \, \nu(w,x,y; {\rm d} z) + \nonumber \\
       &    &   \qquad \qquad \qquad \qquad  \qquad \qquad \qquad \qquad  \int_{\mathsf{Z}} \log \frac{\hat{\nu}_W(w ; C_z)}{\mu_\ast(C_z)} \, \nu(w,x,y; {\rm d} z)  \nonumber \\
       & = &  \sum_{C \in \gamma} \nu(w,x,y ; C)  \log \frac{ \nu(w,x,y ; C)}{\hat{\nu}_{W,X}(w,x ; C)}  + \label{firstterm} \\
       &    &     \qquad \qquad  \qquad \qquad   \sum_{C \in \gamma} \nu(w, x, y ; C)  \log \frac{\hat{\nu}_{W,X}(w, x ; C)}{\hat{\nu}_W(w; C)} + \label{secondterm}\\ 
       &    &     \qquad \qquad  \qquad \qquad \qquad \qquad  \qquad \qquad \sum_{C \in \gamma} \nu(w, x, y ; C)  \log \frac{\hat{\nu}_W(w; C)}{\mu_\ast(C)} \label{thirdterm} 
\end{eqnarray}
An integration of the last term (\ref{thirdterm}) with respect to $\mu$
yields, by the same reasoning as in the steps (\ref{step1}), (\ref{gener}), and (\ref{step2}), 
\[
     I_\gamma(W \to Z) \; = \; \sum_{C \in \gamma} \int_{\mathsf{W}} \hat{\nu}_W(w ; C)  \log \frac{\hat{\nu}_W(w; C)}{\mu_\ast(C)} \, \mu_W({\rm d} w).
\]
A corresponding integrating of the first term (\ref{firstterm}) with respect to $\mu$ yields a non-negative quantity that can be interpreted as    
$I_\gamma(Y \to Z | W, X)$ (see definition (\ref{condmutinfxyz2})). Even though we will have to slightly adjust this first term, 
the problem we are facing here is most clearly highlighted by the second term, (\ref{secondterm}).  
In order to naturally generalise the chain rule (\ref{chainnew}) we have to interpret the integral of the
second term 
as $I_\gamma(X \to Z | W)$. However, it turns out that, in general, 
\begin{eqnarray}
   I_\gamma(X \to Z | W) 
      &    =    &  \sum_{C \in \gamma} \int_{\mathsf{W} \times \mathsf{X}} \hat{\nu}_{W,X}(w, x ; C)  
                 \log \frac{\hat{\nu}_{W,X}(w, x; C)}{\hat{\nu}_W(w; C)} \, \mu_{W,X}({\rm d} w, {\rm d} x) \\
      & \not= &   \sum_{C \in \gamma} \int_{\mathsf{W} \times \mathsf{X} \times \mathsf{Y}} \nu(w, x, y ; C)  
                 \log \frac{\hat{\nu}_{W,X}(w, x; C)}{\hat{\nu}_W(w; C)} \, \mu_{W,X}({\rm d} w, {\rm d} x, {\rm d} y), \label{secondterm2}         
\end{eqnarray}
where (\ref{secondterm2}) is the integral of the term (\ref{secondterm}) with respect to $\mu$. 
We cannot even ensure that this integral is non-negative. The reason is that 
the $\sigma$-algebra used for the definition of $\hat{\nu}_W(w; C)$ is not necessarily a $\sigma$-subalgebra of the one used for the definition of the kernel 
$\hat{\nu}_{W,X}(w, x ; C)$ (the situation is similar to the one of Example \ref{notsubalgebra}). Therefore, the reasoning of the steps (\ref{step1}), (\ref{gener}), and (\ref{step2}), cannot be applied here.   
\medskip

The problem highlighted in this section will now be resolved. 
This will be done by a modification of the involved $\sigma$-algebras, which should define a filtration in order to imply a general causal version of the chain rule. 
In the next section, this modification will be presented for the general case of $n$ input variables. 

\section{The general definition of causal information flow}
\subsection{Filtrations and information} \label{sigmaalg}
Let $(\mathsf{X}_i, \rscr{X}_i)$, $i \in N := [n] = \{1,\dots,n\}$, be a family of measurable spaces, the state spaces of the input variables. 
For each subset $M$ of $N$, we have the 
corresponding product space $(\mathsf{X}_M, \rscr{X}_M)$ consisting of 
$\mathsf{X}_M := \times_{i \in M} \mathsf{X}_i$ and $\rscr{X}_M := \otimes_{i \in M} \rscr{X}_i$. 
Note that for $M = \emptyset$, the set $\mathsf{X}_\emptyset$ consists of one element, the empty sequence $\epsilon$, and 
$\rscr{X}_\emptyset = \{\emptyset, \{\epsilon\} \}$ is the trivial $\sigma$-algebra with two elements. In addition to the input variables, we consider an output 
variable with state space $(\mathsf{Z}, \rscr{Z})$. The input-output channel is given by a Markov kernel 
\[
     \nu: \; \mathsf{X}_N \times \rscr{Z} \; \to \; [0,1].
\]
Together with a probability measure $\mu$ on $(\mathsf{X}_N, \rscr{X}_N)$ this 
defines the probability space $(\Omega, \rscr{F}, {\Bbb P})$ where  
\[
     \Omega \; := \; \mathsf{X}_N \times \mathsf{Z}, \qquad \rscr{F} \; := \; \rscr{X}_N \otimes \rscr{Z}, 
\]
and  
\[ 
     {\Bbb P}(A \times C) \; := \; \int_A \nu(x;C) \, \mu({\rm d} x), \qquad A \in \rscr{X}_N, \;\; C \in \rscr{Z}. 
\]
Finally, we have the canonical projections 
\[
     X_M : \Omega \; \to \; \mathsf{X}_M, \quad M \subseteq N, \qquad \mbox{and} \qquad Z: \Omega \; \to \; \mathsf{Z}.
\]
\medskip

We are now going to define the $M$-marginal of the channel based on a general $\sigma$-subalgebra $\rscr{B}_M$ of $\rscr{X}_M$. 
Below, in Section \ref{coupling}, this will allow us to incorporate causal aspects of $\nu$ by an appropriate adaptation of $\rscr{B}_M$ to $\nu$.  
In order to highlight the flexibility that we have here, let us begin with the usual definition where $\rscr{B}_M$ equals the largest  
$\sigma$-subalgebra of $\rscr{X}_M$, that is $\rscr{X}_M$ itself. 
Given a measurable set $C \subseteq \mathsf{Z}$, we have the conditional expectation
\begin{equation} \label{bederwa}
    {\Bbb E}(\mathbbm{1}_{\{Z \in C\}} | X_{M}). 
\end{equation}
This is by definition an $X_M$-measurable function $\Omega \to {\Bbb R}$. By the factorisation lemma we can represent it 
as a composition $ {\Bbb E}(\mathbbm{1}_{\{Z \in C\}} | X_{M}) =  
\nu_M(X_M ; C)$
with a measurable function $\nu_M( \cdot ; C) : (\mathsf{X}_M, \rscr{X}_M) \to {\Bbb R}$. This allows us to define the conditional distribution
\begin{equation} \label{conddistri}
     {\Bbb P}(Z \in C | X_M = x_M) \; := \; \nu_M( x_M ; C)    
\end{equation}
which can be interpreted as a channel 
\[
   \nu_M: \; \mathsf{X}_M \times \rscr{Z} \; \to \; [0,1], \qquad (x_M,C) \; \mapsto \; \nu_M(x_M ;C).
\]
We now modify the outlined marginalisation of $\nu$ by reducing the maximal $\sigma$-algebra 
$\rscr{X}_M$ to the $\sigma$-subalgebra $\rscr{B}_M$. More precisely, we replace $X_M$ in (\ref{bederwa}) by 
\begin{equation} \label{projection}
      \widehat{X}_M: \; (\Omega, \rscr{F}, {\Bbb P}) \; \to \; (\mathsf{X}_M, \rscr{B}_M) 
\end{equation}  
and consider the conditional expectation
\begin{equation} \label{bederwa2}
    {\Bbb E}(\mathbbm{1}_{\{Z \in C\}} | \widehat{X}_{M}). 
\end{equation}
This is now an $\widehat{X}_M$-measurable function $\Omega \to {\Bbb R}$, and, by the factorisation lemma, we can represent it 
as a composition ${\Bbb E}(\mathbbm{1}_{\{Z \in C\}} | \widehat{X}_{M}) =  
\hat{\nu}( \widehat{X}_M ; C)$
with a measurable function $\hat{\nu}( \cdot ; C) : (\mathsf{X}_M, \rscr{B}_M) \to {\Bbb R}$. Finally, we have the modification 
\[
     {\Bbb P}(Z \in C | \widehat{X}_M = x_M) \; := \;  \hat{\nu}( x_M ; C)   
\] 
of the conditional distribution (\ref{conddistri}), which corresponds to a modified channel
\[
   \hat{\nu}_M: \; \mathsf{X}_M \times \rscr{Z} \; \to \; [0,1], \qquad (x_M ,C) \; \mapsto \; \hat{\nu}_M(x_M ;C).  
\]
By construction, $\hat{\nu}_M$ is $\rscr{B}_M$-measurable, which means that it uses only information that is contained in $\rscr{B}_M$. 
For the maximal $\sigma$-algebra we recover $\nu_M$. We can also consider the other extreme where $\rscr{B}_M$ equals 
the smallest $\sigma$-algebra, $\{\emptyset , \mathsf{X}_M\}$. In that case, we obtain $\hat{\nu}(x_M ; C) = \mu_\ast(C)$. 
An adjustment of $\rscr{B}_M$ to the information actually used by $\nu$ will allow us to interpret 
$\hat{\nu}$ causally. In contrast, if we do not have such an adjustment, $\hat{\nu}_M$ will represent a hypothetical channel, a ``ghost channel'', based on the
$\sigma$-algebra of an external observer rather than the $\sigma$-algebra of the actual mechanisms of the channel. \\

We now consider a family $\rscr{B} = {(\rscr{B}_M)}_{M \subseteq N}$ of $\sigma$-algebras. It 
gives rise to a corresponding family 
\[
     \rscr{F}_M \; := \; X_M^{-1}(\rscr{B}_M) \; \subseteq \; \rscr{F}, \qquad M \subseteq N,
\]
of $\sigma$-algebras on $\Omega$. We call the family $\rscr{B}$ {\em projective\/}, if 
the maps 
\[
     \pi^M_L : \mathsf{X}_M \to \mathsf{X}_L, \qquad x_M = {(x_i)}_{i \in M} \mapsto x_L = {(x_i)}_{i \in L}, \qquad L \subseteq M \subseteq N,
\]
are $\rscr{B}_M$-$\rscr{B}_L$-measurable. For projective families, we have the following monotonicity: 
\begin{equation} \label{monot}
     L \subseteq M \qquad \Rightarrow \qquad \rscr{F}_L \subseteq \rscr{F}_M.  
\end{equation}
Given a projective family $\rscr{B}$, we now define a corresponding family of information-theoretic quantities which generalise 
(conditional) mutual information. 
We begin with a local version, applied to a measurable 
partition $\gamma$ of $\mathsf{Z}$. For $z \in \mathsf{Z}$, we denote the set in $\gamma$ that contains $z$ by $C_z$. 
For $L \subseteq M \subseteq N$, we consider $x_M = (x_L, x_{M \setminus L}) \in \mathsf{X}_M$ and define 
\[
   i_\gamma(x_{M \setminus L} \to z | x_L) \; := \;  \log \frac{\hat{\nu}_M(x_M; C_z)}{\hat{\nu}_L(x_L; C_z)} .
\]
This is a local version of the conditional mutual information. Integration over $z$ yields 
\[
 \int_{\mathsf{Z}}  \log \frac{\hat{\nu}_M(x_M ; C_z)}{\hat{\nu}_L(x_L; C_z)} \, \nu(x ; {\rm d} z) 
    \; = \; \sum_{C \in \gamma} \nu(x ; C)  \log \frac{\hat{\nu}_M(x_M ; C)}{\hat{\nu}_L(x_L; C)}. 
\]
With a second integration, with respect to $\mu$, we obtain 
\begin{eqnarray*}
   \lefteqn{\sum_{C \in \gamma} \int_{\mathsf{X}} \nu(x ; C)  \log \frac{\hat{\nu}_M(x_M ; C)}{\hat{\nu}_L(x_L; C)} \, \mu({\rm d} x)} \\
    & = &  \sum_{C \in \gamma} \int_{\Omega} {\Bbb E}( \mathbbm{1}_{\{Z \in C\}} | X)   
               \log \frac{{\Bbb E}(\mathbbm{1}_{\{Z \in C\}} | \widehat{X}_M)}{{\Bbb E}(\mathbbm{1}_{\{Z \in C\}} | \widehat{X}_L )} \, {\rm d}{\Bbb P}  \\
    & = &  \sum_{C \in \gamma} \int_{\Omega} {\Bbb E}( \mathbbm{1}_{\{Z \in C\}} | \widehat{X}_M)   
               \log \frac{{\Bbb E}(\mathbbm{1}_{\{Z \in C\}} | \widehat{X}_M)}{{\Bbb E}(\mathbbm{1}_{\{Z \in C\}} | \widehat{X}_L )} \, {\rm d}{\Bbb P}  \qquad 
                \mbox{(by the monotonicity (\ref{monot}))} \\
    & = &  \sum_{C \in \gamma} \int_{\mathsf{X}_M} \hat{\nu}_M(x_M ; C)  \log \frac{\hat{\nu}_M(x_M ; C)}{\hat{\nu}_L(x_L; C)} \, \mu_M({\rm d} x_M) ,            
\end{eqnarray*}
where $\mu_M$ denotes the $M$-marginal of $\mu$. 
This suggests the following version of the conditional mutual information which we refer to as information flow.    

\begin{definition} \label{defin}
Let $\gamma$ be a finite measurable partition of $\mathsf{Z}$, and let 
$L \subseteq M \subseteq N$. Then we define the {\em information flow from $X_{M \setminus L}$ to $Z$ given $X_L$\/} as
\begin{equation}
     I_\gamma(X_{M\setminus L} \to Z | X_L) \; := \; 
     \sum_{C \in \gamma} \int_{\mathsf{X}_M} \hat{\nu}_M(x_M ; C)  \log \frac{\hat{\nu}_M(x_M ; C)}{\hat{\nu}_L(x_L; C)} \, \mu_M({\rm d} x_M) . 
\end{equation}
For $L = \emptyset$ we simplify the notation by $I_\gamma(X_{M\setminus L} \to Z)$ and refer to the {\em information flow from $X_M$ to $Z$\/}. 
\end{definition} 

Given disjoint subsets $M_1, M_2, \dots , M_k$ of $N$, we use a filtration of $\sigma$-algebras 
for proving a general chain rule for information flows. 

\begin{theorem}[General chain rule] \label{genchainrule}
Consider a projective family $\rscr{B}$ and let 
$M_1, M_2, \dots , M_k$ be disjoint subsets of $N$. Then 
\begin{eqnarray}
    \lefteqn{I_\gamma(X_{M_1}, \dots, X_{M_k}  \to Z) } \nonumber \\ 
    & = & I_\gamma(X_{M_1} \to Z) + I_\gamma(X_{M_2} \to Z | X_{M_1})  + \cdots + I_\gamma(X_{M_k} \to Z | X_{M_1}, X_{M_2}, \dots, X_{M_{k - 1}}).    \label{chainr}
\end{eqnarray}
\end{theorem}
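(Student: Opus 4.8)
The plan is to reduce the assertion to an iterated application of the two-term decomposition that already underlies Definition~\ref{defin}, exploiting the fact that a \emph{projective} family produces a genuine filtration as soon as the index sets are arranged into an increasing chain. Concretely, I would set $L_0 := \emptyset$ and $L_j := M_1 \cup M_2 \cup \cdots \cup M_j$ for $j = 1,\dots,k$. Since the $M_i$ are pairwise disjoint we have $L_{j-1} \subseteq L_j$ with $L_j \setminus L_{j-1} = M_j$, so by the monotonicity~(\ref{monot}) of a projective family the associated $\sigma$-algebras satisfy $\rscr{F}_{L_0} \subseteq \rscr{F}_{L_1} \subseteq \cdots \subseteq \rscr{F}_{L_k}$, i.e.\ they form a filtration of $\rscr{F}$. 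Since $\rscr{B}_\emptyset$ is the trivial $\sigma$-algebra, $\hat{\nu}_{L_0}(\epsilon ; C) = \mu_\ast(C)$; and $I_\gamma(X_{M_1},\dots,X_{M_k}\to Z) = I_\gamma(X_{L_k}\to Z)$ by definition.

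Next I would write the telescoping identity for the local quantities: for $z \in \mathsf{Z}$ and $\mu$-almost every $x \in \mathsf{X}_N$, with $C_z$ the atom of $\gamma$ containing $z$,
\[
   \log \frac{\hat{\nu}_{L_k}(x_{L_k} ; C_z)}{\mu_\ast(C_z)} \; = \; \sum_{j=1}^{k} \log \frac{\hat{\nu}_{L_j}(x_{L_j} ; C_z)}{\hat{\nu}_{L_{j-1}}(x_{L_{j-1}} ; C_z)} .
\]
Integrating this identity first against $\nu(x ; \mathrm{d}z)$ over $\mathsf{Z}$ and then against $\mu$ over $\mathsf{X}_N$, the left-hand side becomes $I_\gamma(X_{L_k}\to Z)$ exactly as in the computation preceding Definition~\ref{defin} (with $L = \emptyset$, $M = L_k$), while the right-hand side becomes a sum of $k$ integrals. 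It then remains to identify the $j$-th of these with $I_\gamma(X_{M_j}\to Z\mid X_{M_1},\dots,X_{M_{j-1}})$.

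After the two integrations, the $j$-th summand equals
\[
   \sum_{C \in \gamma} \int_{\Omega} {\Bbb E}(\mathbbm{1}_{\{Z \in C\}} \mid X)\,
   \log \frac{{\Bbb E}(\mathbbm{1}_{\{Z \in C\}} \mid \widehat{X}_{L_j})}{{\Bbb E}(\mathbbm{1}_{\{Z \in C\}} \mid \widehat{X}_{L_{j-1}})} \, \mathrm{d}{\Bbb P} .
\]
Here the logarithmic factor is $\rscr{F}_{L_j}$-measurable, because $\rscr{F}_{L_{j-1}}\subseteq\rscr{F}_{L_j}$ by the filtration property established above; hence, by the defining property of conditional expectation used in step~(\ref{gener}) of the proof of Proposition~\ref{firstchain} (pulling ${\Bbb E}(\,\cdot \mid \rscr{F}_{L_j})$ through this factor, together with the tower property since $\rscr{F}_{L_j}\subseteq\sigma(X)$), we may replace ${\Bbb E}(\mathbbm{1}_{\{Z\in C\}}\mid X)$ by ${\Bbb E}(\mathbbm{1}_{\{Z\in C\}}\mid \widehat{X}_{L_j})$. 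Rewriting the resulting integral in terms of the kernels $\hat{\nu}_{L_j}$, $\hat{\nu}_{L_{j-1}}$ and the marginal $\mu_{L_j}$ gives precisely $I_\gamma(X_{M_j}\to Z\mid X_{L_{j-1}})$ by Definition~\ref{defin}; summing over $j$ yields~(\ref{chainr}).

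The only genuinely delicate point is the one already isolated in Section~\ref{threeinp}: the substitution in the previous paragraph needs $\rscr{F}_{L_{j-1}}\subseteq\rscr{F}_{L_j}$, which fails in general for the ``raw'' $\sigma$-algebras $\rscr{A}_X$ of Section~\ref{twoinp} but is guaranteed here by the hypothesis that $\rscr{B}$ is projective. Once projectivity has supplied the filtration, the remainder is bookkeeping along the chain $L_0 \subseteq L_1 \subseteq \cdots \subseteq L_k$, plus the usual care with $\mu$-a.e.\ defined conditional densities and with atoms $C$ for which $\mu_\ast(C) = 0$ or $\hat{\nu}_{L_j}(\,\cdot\,;C) = 0$; these are absorbed by the convention $0\log 0 := 0$ and contribute nothing to the integrals, and the non-negativity of each term (which is not needed for the chain rule itself) follows from the same Jensen argument as in Proposition~\ref{firstchain}.
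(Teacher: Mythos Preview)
Your proposal is correct and follows essentially the same route as the paper's proof: both introduce the increasing chain $L_j = M_1\cup\cdots\cup M_j$ (the paper writes $M^j$), use projectivity to obtain the filtration $\rscr{F}_{L_0}\subseteq\cdots\subseteq\rscr{F}_{L_k}$, telescope the logarithm, and then invoke the tower property of conditional expectation to reduce the weight in the $j$-th summand to ${\Bbb P}(Z\in C\mid \widehat{X}_{L_j})$. The only cosmetic difference is that you start the integration from $\nu(x;C)={\Bbb E}(\mathbbm{1}_{\{Z\in C\}}\mid X)$ and tower down from $\sigma(X)$, whereas the paper starts from $\hat{\nu}_{M^k}$ and towers down from $\rscr{F}_{M^k}$; both are valid and amount to the same computation.
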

\begin{proof} Let $M^j := \cup_{i = 1}^j M_i$, $j = 0,1,\dots,k$. The monotonicity (\ref{monot}) implies that the sequence 
\[
      \rscr{F}_j := \rscr{F}_{M^j} := X_{M^j}^{-1}(\rscr{B}_M), \qquad j = 0,\dots, k,
\] 
is increasing and therefore represents a filtration of $\sigma$-algebras. This implies
\begin{eqnarray*}
\lefteqn{I_\gamma(X_{M_1},\dots,X_{M_k} \to Z)} \\ 
   & = & \sum_{C \in \gamma} \int_{\mathsf{X}_{M^k}}  \hat{\nu}(x_{M^k} ; C) \log \frac{\hat{\nu}(x_{M^k} ; C)}{\mu_\ast (C)}  \, \mu ({\rm d} x_{M^k}) \\
   & = & \sum_{C \in \gamma} \int_{\mathsf{X}_{M^k}}  \hat{\nu}(x_{M^k} ; C) \log 
             \left( \prod_{j = 1}^k \frac{\hat{\nu}_{M^j}(x_{M^j} ; C)}{\hat{\nu}_{M^{j-1}}(x_{M^{j-1}} ; C)} \right)  \, \mu ({\rm d} x_{M^k}) \\
   & = & \sum_{j = 1}^k \sum_{C \in \gamma} \int_{\mathsf{X}_{M^k}}  \hat{\nu}(x_{M^k} ; C) 
             \log \frac{\hat{\nu}_{M^j}(x_{M^j} ; C)}{\hat{\nu}_{M^{j-1}}(x_{M^{j-1}} ; C)} \, \mu ({\rm d} x_{M^k}) \\
      & = & \sum_{j = 1}^k \sum_{C \in \gamma} \int_{\Omega}  {\Bbb P}(Z \in C | \widehat{X}_{M^k}) 
              \log \frac{{\Bbb P}(Z \in C | \widehat{X}_{M^j})}{{\Bbb P}(Z \in C | \widehat{X}_{M^{j-1}})} \, {\rm d} {\Bbb P} \\
   & = & \sum_{j = 1}^k \sum_{C \in \gamma} \int_{\Omega}  {\Bbb P}(Z \in C | \widehat{X}_{M^j}) 
              \log \frac{{\Bbb P}(Z \in C | \widehat{X}_{M^j})}{{\Bbb P}(Z \in C | \widehat{X}_{M^{j - 1}})} \, 
                {\rm d} {\Bbb P} \\
   & = & \sum_{j = 1}^k \sum_{C \in \gamma} \int_{\mathsf{X}_{M^j}}  \hat{\nu}_{M^j}(x_{M^j} ; C) 
             \log \frac{\hat{\nu}_{M^j}(x_{M^j} ; C)}{\hat{\nu}_{M^{j-1}}(x_{M^{j-1}} ; C)} \, \mu ({\rm d} x_{M^j}) \\               
   & = &  \sum_{j = 1}^k I_\gamma(X_{M_j} \to Z | X_{M_1},\dots, X_{M_{j - 1}}) .              
\end{eqnarray*}
\end{proof}

We now state basic properties of the information flow. Some of these properties are listed in \cite{JBGWS13}  as natural postulates (P0--P4) for a measure of 
causal strength.

\begin{proposition}[Natural properties] \label{natprop}
The following properties hold:
\begin{enumerate}  
\item[(a)] The information flow from all input variables to the output variable coincides with the mutual information: 
       $I_\gamma(X_{N} \to Z) = I_\gamma (X_N ; Z)$. 
\item[(b)] For a subset $M$ of $N$, the set of all input variables, 
        the information flow $I_\gamma(X_{M} \to Z )$ is smaller than or equal to the mutual information $I_\gamma(X_{M} ; Z)$.       
\item[(c)] For a subset $M$ of $N$,  
        the information flow $I_\gamma(X_{M} \to Z | X_{N \setminus M})$ is greater than or equal to the conditional mutual information 
        $I_\gamma(X_{M} ; Z | X_{N \setminus M})$.
\item[(d)] If the information flow $I_\gamma(X_{M} \to Z | X_{N \setminus M})$ vanishes then $Z$ is independent of $X_M$ given $X_{N \setminus M}$.         
\item[(e)] Let $L \subseteq M \subseteq N$. If $I_\gamma(X_{M} \to Z ) = 0$ then $I_\gamma(X_{L} \to Z ) = 0$.             
\end{enumerate}   
\end{proposition}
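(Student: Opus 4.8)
The plan is to establish (a)--(e) in this order, leaning throughout on two tools: the general chain rule (Theorem~\ref{genchainrule}) and the elementary bound $a \log \frac{a}{b} \geq a - b$ for $a,b \geq 0$, which together with projectivity yields that \emph{every information flow is nonnegative}. Concretely, for $L \subseteq M$ the computation preceding Definition~\ref{defin} already rewrites $I_\gamma(X_{M \setminus L} \to Z | X_L)$ as $\sum_{C \in \gamma} \int_\Omega {\Bbb E}(\mathbbm{1}_{\{Z \in C\}} | \widehat{X}_M) \log \frac{{\Bbb E}(\mathbbm{1}_{\{Z \in C\}} | \widehat{X}_M)}{{\Bbb E}(\mathbbm{1}_{\{Z \in C\}} | \widehat{X}_L)} \, {\rm d}{\Bbb P}$; since projectivity gives $\sigma(\widehat{X}_L) \subseteq \sigma(\widehat{X}_M)$, the denominator equals ${\Bbb E}(g_C | \widehat{X}_L)$ with $g_C := {\Bbb E}(\mathbbm{1}_{\{Z \in C\}} | \widehat{X}_M) \geq 0$, and each summand is $\geq {\Bbb E}[\, g_C - {\Bbb E}(g_C | \widehat{X}_L) \,] = 0$. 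I would record this nonnegativity first, since it is used in (c), (d) and (e).

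For (a), the defining feature of the coupled family is that $\rscr{B}_N$ already makes $x \mapsto \nu(x;C)$ measurable for every $C \in \rscr{Z}$; hence $\nu(X_N;C)$ is $\widehat{X}_N$-measurable and therefore equals its own conditional expectation, so $\hat{\nu}_N(\cdot;C) = \nu(\cdot;C)$ $\mu$-a.e. Substituting this into Definition~\ref{defin} with $M = N$, $L = \emptyset$ (so that $\hat{\nu}_\emptyset \equiv \mu_\ast$) turns $I_\gamma(X_N \to Z)$ into the defining integral of $I_\gamma(X_N;Z)$, cf.\ (\ref{causalv}). For (b), the point is $\sigma(\widehat{X}_M) \subseteq \sigma(X_M)$ (because $\rscr{B}_M \subseteq \rscr{X}_M$), so by the tower property $\hat{\nu}_M(X_M;C) = {\Bbb E}(\nu_M(X_M;C) | \widehat{X}_M)$; I would then run exactly the conditional-Jensen argument from the proof of Proposition~\ref{firstchain}, with the convex function $\phi_C(r) := r \log \frac{r}{\mu_\ast(C)}$ and $\phi_C(0) := 0$, giving $I_\gamma(X_M ; Z) = \sum_{C \in \gamma} {\Bbb E}[\phi_C(\nu_M(X_M;C))] \geq \sum_{C \in \gamma} {\Bbb E}[\phi_C(\hat{\nu}_M(X_M;C))] = I_\gamma(X_M \to Z)$.

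Part (c) combines two chain rules: Theorem~\ref{genchainrule} with $M_1 = N \setminus M$, $M_2 = M$ gives $I_\gamma(X_N \to Z) = I_\gamma(X_{N \setminus M} \to Z) + I_\gamma(X_M \to Z | X_{N \setminus M})$, while the two-input chain rule (\ref{chainr2}), applied to the grouped variables $X_{N \setminus M}$ and $X_M$, gives $I_\gamma(X_N ; Z) = I_\gamma(X_{N \setminus M} ; Z) + I_\gamma(X_M ; Z | X_{N \setminus M})$; subtracting and using (a) and (b) yields $I_\gamma(X_M \to Z | X_{N \setminus M}) - I_\gamma(X_M ; Z | X_{N \setminus M}) = I_\gamma(X_{N \setminus M} ; Z) - I_\gamma(X_{N \setminus M} \to Z) \geq 0$. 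Then (d) is immediate from (c): $0 = I_\gamma(X_M \to Z | X_{N \setminus M}) \geq I_\gamma(X_M ; Z | X_{N \setminus M}) \geq 0$, so the classical conditional mutual information vanishes; writing it as a sum over $C \in \gamma$ of nonnegative relative-entropy integrals forces ${\Bbb P}(Z \in C | X_N) = {\Bbb P}(Z \in C | X_{N \setminus M})$ $\mu$-a.e.\ for each $C \in \gamma$, i.e.\ the asserted conditional independence (at the level of $\gamma$, and of $Z$ itself upon exhausting over all finite partitions). Finally, (e) follows from Theorem~\ref{genchainrule} with $M_1 = L$, $M_2 = M \setminus L$: $I_\gamma(X_M \to Z) = I_\gamma(X_L \to Z) + I_\gamma(X_{M \setminus L} \to Z | X_L)$ is a sum of two nonnegative terms, so its vanishing forces $I_\gamma(X_L \to Z) = 0$.

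The main obstacle is essentially bookkeeping: keeping straight with respect to which $\sigma$-algebra each conditional expectation is formed, passing cleanly between conditional expectations under ${\Bbb P}$ on $\Omega$ and the corresponding statements under $\mu$ on $\mathsf{X}_N$, and invoking projectivity precisely where the inclusion $\sigma(\widehat{X}_L) \subseteq \sigma(\widehat{X}_M)$ is needed; the inequalities themselves are all instances of (conditional) Jensen and require no new idea. A secondary caveat is that (a), hence (c) and (d), uses that $\rscr{B}_N$ is rich enough to carry $\nu$; for a fully arbitrary projective family one should instead phrase (a) as the implication ``$\hat{\nu}_N = \nu$ $\mu$-a.e.\ $\Rightarrow I_\gamma(X_N \to Z) = I_\gamma(X_N ; Z)$'' and weaken the later parts correspondingly.
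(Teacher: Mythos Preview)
Your proposal is correct and follows essentially the same route as the paper: (a) via $\hat{\nu}_N = \nu$ and $\hat{\nu}_\emptyset = \mu_\ast$; (b) via the conditional-Jensen argument of Proposition~\ref{firstchain}; (c) by combining (a), (b), and the two chain rules; (d) from (c); and (e) from the chain rule with $M_1 = L$, $M_2 = M \setminus L$. Your explicit preliminary step establishing nonnegativity of every information flow, and your closing caveat that (a) tacitly requires $\rscr{B}_N$ to render $\nu$ measurable, are both helpful clarifications that the paper leaves implicit.
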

\begin{proof} 
Statement (a) follows from $\hat{\nu}_N(x_N ; C) = \nu (x_N ; C)$ and $\hat{\nu}_\emptyset(x ; C) = \mu_\ast (C)$. The statements (b) and (c) 
can be proven in the same way as the corresponding inequalities (\ref{compa}) of Proposition \ref{firstchain}, thereby using the chain rule  
\begin{equation*} \label{cha}
     I_\gamma(X_N \to Z) \; = \; I_\gamma (X_M \to Z) + I_\gamma(X_{N \setminus M} \to Z | X_M)
\end{equation*}
for $M \subseteq N$ (this follows from the general chain rule (\ref{chainr}),
with $M_1 = M$ and $M_2 = N \setminus M$).
In order to prove (d), note that with (c) we have 
\[
     I_\gamma(X_{M} \to Z | X_{N \setminus M}) \, = \, 0 \quad \Rightarrow \quad I_\gamma(X_{M} ; Z | X_{N \setminus M}) \, = \, 0. 
\] 
This implies that $X_M$ is independent of $Z$ given $X_{N \setminus M}$. Finally, (e) follows from the chain rule 
\begin{equation*} \label{cha}
     I_\gamma(X_M \to Z) \; = \; I_\gamma (X_L \to Z) + I_\gamma(X_{M \setminus L} \to Z | X_L),
\end{equation*} 
by the general chain rule (\ref{chainr}), with $M_1 = L$ and $M_2 = M \setminus L$.
\end{proof}

\subsection{Adaptation of the filtration to the channel} \label{coupling}
We are now going to couple the family ${(\rscr{B}_M)}_{M \subseteq N}$ to the channel $\nu$ so that we can interpret the corresponding 
marginals ${(\hat{\nu}_M)}_{M \subseteq N}$ causally.  
In order to simplify the presentation, we first consider an arbitrary 
$\sigma$-subalgebra $\rscr{A}$ of $\rscr{X}_N$. (Below, $\rscr{A}$ will be chosen to be the $\sigma$-algebra generated by $\nu$.) 
We begin with information in $M$ in the context of a configuration $\bar{x}$ outside of $M$, that is $\bar{x} \in \mathsf{X}_{N \setminus M}$. 
Given such an $\bar{x}$, we define 
the {\em $(M,\bar{x})$-trace\/} of $\rscr{A}$ as follows: 
For each $A \in \rscr{A}$, we consider the {\em $(M,\bar{x})$-section\/} of $A$, 
\[
      {\rm sec}_{M, \bar{x}}(A) \; := \; A_{M, \bar{x}} \; := \; \left\{ x \in \mathsf{X}_M \; : \; (x, \bar{x}) \in A \right\}.
\]  
These sections then form the $(M,\bar{x})$-trace of $\rscr{A}$, that is
\[
    {\rm tr}_{M, \bar{x}}(\rscr{A}) \; := \; \rscr{A}_{M, \bar{x}} \; := \; \left\{  A_{M, \bar{x}} \; : \;  A \in \rscr{A} \right\}.
\]
Considering all possible contexts $\bar{x} \in \mathsf{X}_{N \setminus M}$, we finally define the {\em $M$-trace\/} of $\rscr{A}$ as
\[
    {\rm tr}_{M}(\rscr{A}) \; := \; \rscr{A}_M \; := \; \bigvee_{\bar{x} \in \mathsf{X}_{N \setminus M}}  \rscr{A}_{M, \bar{x}}. 
\]
The $(M,\bar{x})$-trace as well as the $M$-trace of $\rscr{A}$ are $\sigma$-subalgebras of $\rscr{X}_M$. Note that in the extreme cases $M = \emptyset$ and $M = N$, 
we recover $\rscr{A}_\emptyset = \{\emptyset, \{ \epsilon \} = \mathsf{X}_\emptyset\}$ (where $\epsilon$ denotes the empty sequence), and $\rscr{A}_N = \rscr{A}$, respectively.  
\medskip

The family of all $M$-traces of $\rscr{A}$ describes how $\rscr{A}$ is ``distributed'' over the subsets $M$ of $N$. However, there is a problem here:     
The canonical projections $\pi^M_L$
are not necessarily $\rscr{A}_M$-$\rscr{A}_L$-measurable. This projectivity 
property is required for the definition of a measure of causal information flow that satisfies the general 
chain rule of Theorem \ref{genchainrule}. We highlighted this problem for the three-input case in Section \ref{threeinp}. 
There are two ways to recover the projectivity, first by extending and second by reducing  
$\rscr{A}_M$ appropriately. Let us begin with the extension: 
\begin{equation} \label{min}
         \overline{\rscr{A}}_M \; := \; \bigvee_{L \subseteq M}  \left( {\pi^M_L} \right)^{-1} \left( \rscr{A}_L \right).
\end{equation}
We have the following characterisation of the family $\overline{\rscr{A}}_M$, $M \subseteq N$, as the smallest 
projective extension of the family $\rscr{A}_M$, $M \subseteq N$. 

\begin{proposition}[Extension of $\rscr{A}_M$, $M \subseteq N$] 
The family $\overline{\rscr{A}}_M$, $M \subseteq N$, satisfies the following two conditions:
\begin{enumerate}
\item For all $M \subseteq N$, $\rscr{A}_M$ is contained in $\overline{\rscr{A}}_M$.  
\item For all $L \subseteq M \subseteq N$, the canonical projection $\pi^M_L$ is $\overline{\rscr{A}}_M$-$\overline{\rscr{A}}_L$-measurable.
\end{enumerate}
Furthermore, for every family $\rscr{A}_M'$, $M \subseteq N$, that satisfies these two conditions (where $\overline{\rscr{A}}_M$ is replaced by $\rscr{A}'_M$), we have 
\begin{equation} \label{minim}
      \overline{\rscr{A}}_M \; \subseteq \; \rscr{A}_M', \qquad \mbox{for all $M \subseteq N$}.  
\end{equation}
\end{proposition}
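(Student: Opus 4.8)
The plan is to verify the three assertions in turn, leaning on two elementary facts: that taking preimages commutes with generation and hence with arbitrary joins, i.e. $f^{-1}\big(\bigvee_\alpha \rscr{G}_\alpha\big) = \bigvee_\alpha f^{-1}(\rscr{G}_\alpha)$; and the functoriality of the coordinate projections, $\pi^L_K \circ \pi^M_L = \pi^M_K$ whenever $K \subseteq L \subseteq M$. I would state the first of these explicitly, since $\overline{\rscr{A}}_L$ is itself defined as a join and the proof of the projectivity condition rests on distributing a preimage over that join.

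For the first condition, $\rscr{A}_M \subseteq \overline{\rscr{A}}_M$, I would simply take $L = M$ in the defining join. Since $\pi^M_M = \mathrm{id}_{\mathsf{X}_M}$, the term $(\pi^M_M)^{-1}(\rscr{A}_M) = \rscr{A}_M$ already occurs among the generators of $\overline{\rscr{A}}_M$, so the inclusion is immediate.

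For the second condition, fix $L \subseteq M \subseteq N$; the task is to show $(\pi^M_L)^{-1}(\overline{\rscr{A}}_L) \subseteq \overline{\rscr{A}}_M$. I would expand $\overline{\rscr{A}}_L = \bigvee_{K \subseteq L} (\pi^L_K)^{-1}(\rscr{A}_K)$, push the preimage $(\pi^M_L)^{-1}$ through the join, and then treat a single term: $(\pi^M_L)^{-1}\big((\pi^L_K)^{-1}(\rscr{A}_K)\big) = (\pi^L_K \circ \pi^M_L)^{-1}(\rscr{A}_K) = (\pi^M_K)^{-1}(\rscr{A}_K)$ by functoriality. Since $K \subseteq L \subseteq M$ forces $K \subseteq M$, this set is one of the generators of $\overline{\rscr{A}}_M$; joining over all $K \subseteq L$ then gives the required inclusion.

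For the minimality statement, suppose $\rscr{A}'_M$, $M \subseteq N$, satisfies conditions~1 and~2. Fix $M$ and $L \subseteq M$. Condition~1 for the primed family gives $\rscr{A}_L \subseteq \rscr{A}'_L$, hence $(\pi^M_L)^{-1}(\rscr{A}_L) \subseteq (\pi^M_L)^{-1}(\rscr{A}'_L)$, and condition~2 for the primed family says precisely that $(\pi^M_L)^{-1}(\rscr{A}'_L) \subseteq \rscr{A}'_M$. Thus every generator $(\pi^M_L)^{-1}(\rscr{A}_L)$ of $\overline{\rscr{A}}_M$ lies in $\rscr{A}'_M$, and taking the join over $L \subseteq M$ yields $\overline{\rscr{A}}_M \subseteq \rscr{A}'_M$. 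There is no genuine obstacle here beyond keeping the bookkeeping with preimages and joins straight; the only point needing a little care is the interchange of preimage and join used in the proof of condition~2, which I would justify once and reuse.
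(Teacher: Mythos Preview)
Your proof is correct and follows essentially the same route as the paper: condition~1 via the term $L=M$ in the defining join, condition~2 by reducing $(\pi^M_L)^{-1}((\pi^L_K)^{-1}(\rscr{A}_K))$ to $(\pi^M_K)^{-1}(\rscr{A}_K)$ via $\pi^L_K\circ\pi^M_L=\pi^M_K$, and minimality by showing each generator $(\pi^M_L)^{-1}(\rscr{A}_L)$ lies in $\rscr{A}'_M$. The only difference is cosmetic: you make the interchange of preimage and join explicit, whereas the paper invokes it tacitly.
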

\begin{proof} The first statement is clear (simply choose on the RHS of (\ref{min}) $L = M$). For the second statement, we have to show
\[
     \left( \pi^M_L \right)^{-1} \left( \overline{\rscr{A}}_L \right) \; = \; \overline{\rscr{A}}_M . 
\]
Given that 
\[
      \overline{\rscr{A}}_L  \; = \; \bigvee_{K \subseteq L}  \left( {\pi^L_K} \right)^{-1} \left( \rscr{A}_K \right)
\]
it is sufficient to verify
\begin{equation} \label{inclus1}
     \left( \pi^M_L \right)^{-1} \left(   \left( {\pi^L_K} \right)^{-1} \left( \rscr{A}_K \right) \right) \; \subseteq \; \overline{\rscr{A}}_M \qquad \mbox{for all $K \subseteq L$}. 
\end{equation}
The LHS of (\ref{inclus1}) reduces to $\left( \pi^M_K \right)^{-1} \left( \rscr{A}_K \right)$ which is by definition contained in $\overline{\rscr{A}}_M$. \\
Finally, we prove the minimality. It is easy to see that any family $\rscr{A}'_M$, $M \subseteq N$, that satisfies the two conditions has to contain 
$\left( \pi^M_L \right)^{-1} \left( \rscr{A}_L \right)$, $L \subseteq M$. By definition, $\overline{\rscr{A}}_M$ is the smallest $\sigma$-algebra that contains these 
$\sigma$-subalgebras (see (\ref{min})). This implies (\ref{minim}). 
\end{proof}

After having defined the smallest extension of the family $\rscr{A}_M$, $M \subseteq N$, as one way to recover projectivity, 
we now come to the alternative way, which is by reduction of that family. More precisely, we define 
\begin{equation} \label{max}
    \underline{\rscr{A}}_M \; := \; \left\{ A \in \rscr{X}_M \; : \; \left(\pi^{N}_M\right)^{-1}(A) \in \rscr{A} \right\} .
\end{equation}   
We have the following characterisation of this family as the largest projective reduction of $\rscr{A}_M$, $M \subseteq N$.  

\begin{proposition}[Reduction of $\rscr{A}_M$, $M \subseteq N$] 
The family $\underline{\rscr{A}}_M$, $M \subseteq N$, satisfies the following two conditions:
\begin{enumerate}
\item For all $M \subseteq N$, $\underline{\rscr{A}}_M$ is contained in $\rscr{A}_M$.  
\item For all $L \subseteq M \subseteq N$, the canonical projection $\pi^M_L$ is $\underline{\rscr{A}}_M$-$\underline{\rscr{A}}_L$-measurable.
\end{enumerate}
Furthermore, for every family $\rscr{A}_M'$, $M \subseteq N$, that satisfies these two conditions (where $\underline{\rscr{A}}_M$ is replaced by $\rscr{A}'_M$), we have 
\begin{equation}
     \rscr{A}_M' \; \subseteq \;  \underline{\rscr{A}}_M , \qquad \mbox{for all $M \subseteq N$}.  
\end{equation}
\end{proposition}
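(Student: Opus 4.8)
The plan is to reduce everything to the functoriality of projections, namely the identity $\pi^N_L = \pi^M_L \circ \pi^N_M$ for $L \subseteq M \subseteq N$, together with the boundary value $\rscr{A}_N = \rscr{A}$; once these are in hand the three assertions become one-line preimage computations. First I would record that $\underline{\rscr{A}}_M$ is a $\sigma$-algebra on $\mathsf{X}_M$: since $(\pi^N_M)^{-1}$ commutes with complements and countable unions and carries $\mathsf{X}_M$ to $\mathsf{X}_N \in \rscr{A}$, the defining condition $(\pi^N_M)^{-1}(A) \in \rscr{A}$ is stable under all the $\sigma$-algebra operations, so the measurability statements below are at least meaningful.

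For the inclusion $\underline{\rscr{A}}_M \subseteq \rscr{A}_M$ I would take $A \in \underline{\rscr{A}}_M$ and use the identification $\mathsf{X}_N \cong \mathsf{X}_M \times \mathsf{X}_{N \setminus M}$ to write $(\pi^N_M)^{-1}(A) = A \times \mathsf{X}_{N\setminus M}$; this lies in $\rscr{A}$ by hypothesis, and its $(M,\bar{x})$-section equals $A$ for every $\bar{x} \in \mathsf{X}_{N \setminus M}$, so $A \in {\rm tr}_{M,\bar{x}}(\rscr{A}) = \rscr{A}_{M,\bar{x}} \subseteq \rscr{A}_M$. For projectivity I would take $L \subseteq M$ and $B \in \underline{\rscr{A}}_L$, so that $(\pi^N_L)^{-1}(B) \in \rscr{A}$, and compute $(\pi^N_M)^{-1}\big((\pi^M_L)^{-1}(B)\big) = (\pi^N_L)^{-1}(B) \in \rscr{A}$ using the composition identity; this says precisely $(\pi^M_L)^{-1}(B) \in \underline{\rscr{A}}_M$, i.e. $\pi^M_L$ is $\underline{\rscr{A}}_M$-$\underline{\rscr{A}}_L$-measurable.

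For the maximality statement I would let $\rscr{A}'_M$, $M \subseteq N$, be any family satisfying the two conditions and fix $A \in \rscr{A}'_M$. Applying the second condition to the inclusion $M \subseteq N$ gives that $\pi^N_M$ is $\rscr{A}'_N$-$\rscr{A}'_M$-measurable, hence $(\pi^N_M)^{-1}(A) \in \rscr{A}'_N$; and the first condition gives $\rscr{A}'_N \subseteq \rscr{A}_N = \rscr{A}$, so $(\pi^N_M)^{-1}(A) \in \rscr{A}$, which means $A \in \underline{\rscr{A}}_M$. This yields $\rscr{A}'_M \subseteq \underline{\rscr{A}}_M$ for all $M$. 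I do not expect a genuine obstacle here; the only point that needs a little care is to keep the identification $\mathsf{X}_N \cong \mathsf{X}_M \times \mathsf{X}_{N\setminus M}$ — and hence the description of $(\pi^N_M)^{-1}$ as a cylinder set and of its sections — consistent with the convention fixed in the definition of the $M$-trace, and to note explicitly that $(\pi^N_M)^{-1}$ preserves all Boolean operations so that these manipulations are legitimate.
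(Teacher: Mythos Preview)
Your proof is correct and follows essentially the same approach as the paper: the inclusion $\underline{\rscr{A}}_M \subseteq \rscr{A}_M$ via the cylinder description and constant sections, projectivity via the composition $\pi^N_L = \pi^M_L \circ \pi^N_M$, and maximality via $(\pi^N_M)^{-1}(\rscr{A}'_M) \subseteq \rscr{A}'_N \subseteq \rscr{A}_N = \rscr{A}$ are exactly the paper's steps. Your additional remark that $\underline{\rscr{A}}_M$ is indeed a $\sigma$-algebra is a welcome bit of extra care the paper leaves implicit.
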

\begin{proof} In order to prove the first statement, let $A \in \underline{\rscr{A}}_M$. 
This means that 
\[
    \widetilde{A} \; := \; \left(\pi^{N}_M\right)^{-1}(A) \; = \;  A \times \mathsf{X}_{N \setminus M} \; \in \; \rscr{A}. 
\]
For all $\bar{x} \in \mathsf{X}_{N \setminus M}$, we have
\[
   {\rm sec}_{M, \bar{x}} (\widetilde{A}) \; = \; \left\{ x \in \mathsf{X}_M \; : \; (x, \bar{x}) \in \widetilde{A} \right\} \; = \; A .
\]
This means that $A \in {\rm tr}_M (\rscr{A}) = \rscr{A}_M$, which concludes the proof of the first statement. Now we come to the measurability of the 
canonical projection $\pi^M_L$. For this, we choose $A \in \underline{\rscr{A}}_L$ and have to show $\left(\pi^M_L\right)^{-1}(A) \in \underline{\rscr{A}}_M$:
\[
    \left(\pi^N_M\right)^{-1} \left( \left(\pi^M_L\right)^{-1}(A) \right) \; = \; \left( \pi^N_L \right)^{-1}(A) \; \in \; \rscr{A} \qquad \mbox{(by definition (\ref{max}))}.  
\]
Finally, we have to prove the maximality. Let $\rscr{A}_M'$, $M \subseteq N$, be a family that satisfies the two conditions. Then 
\[
    \left( \pi^N_M \right)^{-1} \left( \rscr{A}'_M\right) \; \subseteq \; \rscr{A}_N' \; \subseteq \; \rscr{A}_N \; = \; \rscr{A}. 
\] 
This means that $\rscr{A}'_M \subseteq \underline{\rscr{A}}_M$.
\end{proof}
\medskip

This concludes the constructions for a given $\sigma$-algebra $\rscr{A}$, without explicit reference to the channel
$\nu: \mathsf{X} \times \rscr{Z} \to [0,1]$. 
We now couple the studied $\sigma$-algebras  
with the channel $\nu$ and therefore choose $\rscr{A}$ to be  
the $\sigma$-algebra generated by the channel $\nu$, that is $\sigma(\nu)$. 
We highlight this coupling by writing $\rscr{A}^\nu$, as a particular choice of $\rscr{A}$, and consider the 
family ${(\rscr{A}_M^\nu)}_{M \subseteq N}$ of its traces, together with the corresponding smallest projective extension 
${(\overline{\rscr{A}}_M^\nu)}_{M \subseteq N}$ and the largest projective reduction ${(\underline{\rscr{A}}_M^\nu)}_{M \subseteq N}$. 
In the context of a channel, the traces of $\rscr{A}^\nu$ have a natural interpretation. In order to see this, we first consider a configuration 
$\bar{x} \in \mathsf{X}_{N \setminus M}$ and define the ``constrained'' Markov kernel   
\[
    \nu_{M, \bar{x}}: \; \mathsf{X}_M \times \rscr{Z}  \; \to \; [0,1], \qquad  \nu_{M, \bar{x}}(x ; C) \; := \; \nu(x, \bar{x} ; C). 
\] 
We denote the $\sigma$-algebra generated by $\nu_{M, \bar{x}}$ by $\sigma_{M, \bar{x}}(\nu)$. Taking all ``constraints'' $\bar{x}$ into account, 
we then define 
\[
      \sigma_M(\nu) \; := \; \bigvee_{\bar{x} \in \mathsf{X}_{N \setminus M}} \sigma_{M, \bar{x}}(\nu). 
\]    

\begin{proposition} Let $\rscr{A}^\nu \subseteq \rscr{X}_N$ be the $\sigma$-algebra generated by the Markov kernel 
$\nu: \mathsf{X}_N \times \rscr{Z} \to [0,1]$. Then for all $M \subseteq N$ and all $\bar{x} \in \mathsf{X}_{N \setminus M}$,  
\[
     \sigma_{M, \bar{x}}(\nu) \, = \, {\rm tr}_{M, \bar{x}} (\rscr{A}^\nu) \quad \mbox{and} \quad \sigma_{M}(\nu) \, = \, {\rm tr}_{M} (\rscr{A}^\nu).   
\]
\end{proposition}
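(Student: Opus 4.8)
The plan is to reduce the proposition to the elementary fact that forming a generated $\sigma$-algebra commutes with taking preimages, and then to obtain the $M$-statement from the $(M,\bar{x})$-statement by passing to joins over $\bar{x}$.

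First I would record the standard lemma I will use: for any map $f\colon S\to T$ and any family $\mathcal{G}$ of subsets of $T$ one has $f^{-1}\!\left(\sigma(\mathcal{G})\right)=\sigma\!\left(f^{-1}(\mathcal{G})\right)$, where $f^{-1}(\mathcal{G}):=\{f^{-1}(G):G\in\mathcal{G}\}$. The inclusion ``$\supseteq$'' holds because $f^{-1}\!\left(\sigma(\mathcal{G})\right)$ is a $\sigma$-algebra containing $f^{-1}(\mathcal{G})$; for ``$\subseteq$'', the collection $\{B\subseteq T: f^{-1}(B)\in\sigma(f^{-1}(\mathcal{G}))\}$ is a $\sigma$-algebra on $T$ (preimages respect complements and countable unions) containing $\mathcal{G}$, hence it contains $\sigma(\mathcal{G})$, which is exactly the claim. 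This is the only genuine ingredient and it is entirely routine.

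Next I would express the $(M,\bar{x})$-section as a preimage. Fix $M\subseteq N$ and $\bar{x}\in\mathsf{X}_{N\setminus M}$, and let $\iota_{M,\bar{x}}\colon\mathsf{X}_M\to\mathsf{X}_N$ be the insertion map sending $x\in\mathsf{X}_M$ to the point of $\mathsf{X}_N$ whose coordinates in $M$ agree with $x$ and whose coordinates in $N\setminus M$ agree with $\bar{x}$. Composing $\iota_{M,\bar{x}}$ with a coordinate projection of $\mathsf{X}_N$ yields either a constant or a coordinate projection of $\mathsf{X}_M$, and since $\rscr{X}_N$ is generated by the coordinate projections this makes $\iota_{M,\bar{x}}$ $\rscr{X}_M$-$\rscr{X}_N$-measurable. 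Directly from the definition of the section, ${\rm sec}_{M,\bar{x}}(A)=\iota_{M,\bar{x}}^{-1}(A)$ for every $A\subseteq\mathsf{X}_N$, hence ${\rm tr}_{M,\bar{x}}(\rscr{A}^\nu)=\iota_{M,\bar{x}}^{-1}(\rscr{A}^\nu)$. Now $\rscr{A}^\nu=\sigma(\mathcal{G}^\nu)$ with $\mathcal{G}^\nu:=\{\nu(\cdot;C)^{-1}(B):C\in\rscr{Z},\ B\in\rscr{B}([0,1])\}$, and the lemma gives ${\rm tr}_{M,\bar{x}}(\rscr{A}^\nu)=\sigma\!\left(\iota_{M,\bar{x}}^{-1}(\mathcal{G}^\nu)\right)$. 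Since $\nu(\cdot;C)\circ\iota_{M,\bar{x}}(x)=\nu(x,\bar{x};C)=\nu_{M,\bar{x}}(x;C)$, we get $\iota_{M,\bar{x}}^{-1}\!\left(\nu(\cdot;C)^{-1}(B)\right)=\nu_{M,\bar{x}}(\cdot;C)^{-1}(B)$, so $\iota_{M,\bar{x}}^{-1}(\mathcal{G}^\nu)$ is precisely the defining generating family of $\sigma_{M,\bar{x}}(\nu)$. This yields the first identity, ${\rm tr}_{M,\bar{x}}(\rscr{A}^\nu)=\sigma_{M,\bar{x}}(\nu)$.

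Finally, the second identity follows with no further work: by definition $\sigma_M(\nu)=\bigvee_{\bar{x}\in\mathsf{X}_{N\setminus M}}\sigma_{M,\bar{x}}(\nu)$ and ${\rm tr}_M(\rscr{A}^\nu)=\bigvee_{\bar{x}\in\mathsf{X}_{N\setminus M}}{\rm tr}_{M,\bar{x}}(\rscr{A}^\nu)$, and the previous paragraph shows these joins are taken over the same $\sigma$-algebras; the extreme cases $M=\emptyset$ and $M=N$ are consistent with this, as one checks straight from the definitions. I do not anticipate a real obstacle. The only points needing a little care are the coordinate bookkeeping that makes $\iota_{M,\bar{x}}$ measurable (i.e.\ the identification of $\mathsf{X}_N$ with $\mathsf{X}_M\times\mathsf{X}_{N\setminus M}$), and the harmless fact that whether one generates $\rscr{A}^\nu$ by Borel subsets of $[0,1]$ or of ${\Bbb R}$ makes no difference.
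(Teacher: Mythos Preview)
Your proof is correct and follows essentially the same route as the paper: identify the generators of $\rscr{A}^\nu$ as sets $\{x:\nu(x;C)\in B\}$, compute their $(M,\bar{x})$-sections, recognise these as the generators of $\sigma_{M,\bar{x}}(\nu)$, and deduce the second identity from the first by passing to joins. The only difference is one of explicitness: you isolate and prove the lemma $f^{-1}(\sigma(\mathcal{G}))=\sigma(f^{-1}(\mathcal{G}))$ and phrase the section operation as preimage under the insertion map $\iota_{M,\bar{x}}$, whereas the paper tacitly uses this fact when it jumps from ``the sections of the generators are the generators of $\sigma_{M,\bar{x}}(\nu)$'' to the conclusion.
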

\begin{proof} The $\sigma$-algebra $\rscr{A}^\nu$ is the smallest $\sigma$-algebra that contains all measurable sets of the form
\begin{equation} \label{form}
       A \; = \; \left\{ x \in \mathsf{X}_N \; : \; \nu(x ; C) \in B \right\},
\end{equation} 
with some $C \in \rscr{Z}$ and a Borel set $B$ in $\rscr{B}([0,1])$. Now consider the $(M, \bar{x})$-section of such a set $A$:
\begin{eqnarray*}
    {\rm sec}_{M, \bar{x}}(A)  
    & = & \{ x \in \mathsf{X}_{N \setminus M} \; : \; (x, \bar{x}) \in A \} \\
    & = &  \{ x \in \mathsf{X}_{N \setminus M} \; : \; \nu(x, \bar{x} ; C) \in B \} \\
    & = &  \{ x \in \mathsf{X}_{N \setminus M} \; : \; \nu_{M, \bar{x}}(x ; C) \in B \}.
\end{eqnarray*} 
This shows that the sections ${\rm sec}_{M, \bar{x}}(A)$ of measurable sets $A$ of the form (\ref{form}) generate $\sigma_{M, \bar{x}}(\nu)$, which proves the first equality. The second equality is a direct implication of the first one. 
\end{proof}

The results of the previous section, Theorem \ref{genchainrule} and Proposition \ref{natprop}, apply to the information flows, 
defined for the projective families ${(\overline{\rscr{A}}_M^\nu)}_{M \subseteq N}$ and ${(\underline{\rscr{A}}_M^\nu)}_{M \subseteq N}$. 
These families take into account the information that is actually used by the channel 
$\nu$. Therefore, we can interpret the corresponding marginal channels $\hat{\nu}_M$ causally, 
where we have to distinguish two kinds of causality. For the projective family ${(\overline{\rscr{A}}_M^\nu)}_{M \subseteq N}$, the channel $\hat{\nu}_M$ incorporates the information in any input configuration $x_K$, $K \subseteq M$, 
that is used by $\nu$ in conjunction with a context configuration $\bar{x}_{N \setminus K} = \bar{x}_{N \setminus M} \bar{x}_{M \setminus K}$ outside of $K$. 
For the projective family 
${(\underline{\rscr{A}}_M^\nu)}_{M \subseteq N}$, on the other hand, the channel $\hat{\nu}_M$ incorporates the information used by $\nu$ that is solely contained in $x_M$, independent of any context. When comparing a marginal channel $\hat{\nu}_M$ with another marginal channel $\hat{\nu}_L$, where $L \subseteq M$, the corresponding information flows       
$\overline{I}_\gamma (X_{M\setminus L} \to Z | X_{L})$ and $\underline{I}_\gamma (X_{M\setminus L} \to Z | X_{L})$, respectively, 
quantify the causal effects in $\hat{\nu}_M$ that exceed those in $\hat{\nu}_L$. These measures will capture different causal aspects, where the difference can be large. This is illustrated by the following extension of Example \ref{notsubalgebra}.    

\begin{example} \label{sum}
Let 
\[
     (\mathsf{X}_i, \rscr{X}_i) \; = \; ({\Bbb R}, \rscr{B}({\Bbb R})), \qquad i \in \{1,\dots, n\} = N,
\]      
where $\rscr{B}({\Bbb R})$ denotes the Borel $\sigma$-algebra of ${\Bbb R}$. We define the channel simply by the sum of the input states, 
interpreted as a Markov kernel,
\[
      \nu(x_1, \dots, x_n ; C) \; := \; \mathbbm{1}_C(x_1 + \dots + x_n). 
\]
As $\rscr{B}({\Bbb R})$ is generated by the intervals $[r - \varepsilon , r + \varepsilon] \subseteq {\Bbb R}$, the smallest $\sigma$-algebra $\rscr{A}^\nu$ for which all 
functions $\nu(\cdot ; C)$ are measurable is generated by the following sets
\[
     A({r, \varepsilon}) \; := \; \left\{ (x_1,\dots, x_n) \in {\Bbb R}^n \; : \; r - \varepsilon \, \leq \, x_1 + \dots + x_n \, \leq \, r + \varepsilon \right\}, 
     \qquad r \in {\Bbb R}, \; \varepsilon \in {\Bbb R}_+. 
\]
For a set $M \subseteq N$ and a context configuration $\bar{x} = {(\bar{x}_i)}_{i \in N \setminus M}\in {\Bbb R}^{N \setminus M}$, the $(M, \bar{x})$-section of $A_{r, \varepsilon}$ is given by 
\[
     {\rm sec}_{M, \bar{x}} (A({r, \varepsilon})) \; = \; \left\{ x = {(x_i)}_{i \in M} \in {\Bbb R}^M \; : \; 
     r - \sum_{i \in N \setminus M} \bar{x}_i - \varepsilon \, \leq \, \sum_{i \in M} x_i \, \leq \, r - \sum_{i \in N \setminus M} \bar{x}_i + \varepsilon   \right\}.
\]
Therefore, the $M$-trace of $\rscr{A}^\nu$, $\rscr{A}^\nu_M$, is generated by the halfspaces 
\[
       H_\vartheta \; := \; \left\{ x = {(x_i)}_{i \in M} \in {\Bbb R}^M \; : \;  \sum_{i \in M} x_i \leq \vartheta \right\}, \qquad \vartheta \in {\Bbb R}.
\] 
For $|M| = 1$, we recover the half lines, so that $\rscr{A}^\nu_{\{i\}} = \rscr{B}({\Bbb R})$. The projective extension then leads to the largest $\sigma$-algebra, 
the Borel algebra of ${\Bbb R}^M$: 
\[
       \overline{\rscr{A}}^\nu_M \; = \; {\rscr B}({\Bbb R}^{M}). 
\]
Therefore, the marginal channel $\hat{\nu}_M(x ; C)$ equals the usual marginal $\nu_M(x ; C)$ for the projective extension. 
For the projective reduction, on the other hand, we obtain the trivial $\sigma$-algebra except for $M = N$:
\[
      \underline{\rscr{A}}^\nu_M \; = \; 
      \left\{
         \begin{array}{c@{,\quad}l}
              \{\emptyset, {\Bbb R}^M\} & \mbox{if $M \subsetneq N$} \\
              \rscr{A}^\nu & \mbox{if $M = N$}
         \end{array}
      \right. .
\]
In this case we have 
$\hat{\nu}_M(x; C) = \mu_\ast (C)$ for $M \not= N$ and $\hat{\nu}_N(x; C) = \nu(x; C)$, 
where $\mu$ is the joint distribution of the input variables. 

We now consider the  
information flows associated with $L \subsetneq M \subseteq N$, for the projective extension as well as for the projective reduction. 
In both cases these flows coincide with usual (conditional) mutual informations, in an instructive way. 
More precisely, for the extension we have 
\begin{equation} \label{infflowext}
   \overline{I}_\gamma (X_{M\setminus L} \to Z | X_{L}) \; = \; {I}_\gamma (X_{M\setminus L} ; Z | X_{L}).   
\end{equation}
For the reduction, we obtain 
\begin{equation} \label{infflowred}
   \underline{I}_\gamma (X_{M\setminus L} \to Z | X_{L}) \; = \; 
   \left\{
   \begin{array}{c@{,\quad}l}
          0 & \mbox{if $M \subsetneq N$} \\
         {I}_\gamma (X_{N} ; Z)  & \mbox{if $M = N$}
   \end{array}
   \right. .
\end{equation}
Interestingly, (\ref{infflowred}) does not depend on $L$. The vanishing of the information flow for $M \not= N$ is due to the fact that the 
output of the channel, the sum $x_1 + \dots + x_n$, cannot be computed from a proper subset of the inputs. The flow of information only takes place
if all inputs are given. 
\end{example}

\section{Conclusions}
Conditioning is an important operation within the study of causality. The theory of causal networks, pioneered by Pearl \cite{Pe00}, introduces interventional 
conditioning as an operation, the so-called {\em do-operation\/}, 
that is fundamentally different from the classical conditioning based on the general rule ${\Bbb P}(B | A) = {\Bbb P}(A \cap B)/{\Bbb P}(A)$. 
It models more appropriately experimental setups and avoids confusion with purely associational dependencies. Information theory 
has been classically used for the quantification of such dependencies, in terms of mutual information and conditional mutual information \cite{Sh48}. 
Within the original setting of information theory, the mutual information between the input and the output of a channel can be interpreted causally. 
In the more general context of causal networks, however, confounding effects make a distinction between associations and causal effects more difficult. In such cases, 
information-theoretic quantities can be misleading as measures of causal effects. In order to overcome this problem, information theory has been coupled with the interventional calculus of causal networks, and corresponding measures of causal information flow have been proposed \cite{AK07, AP08}. Given that such 
measures are based on the notion of an experimental intervention, which represents a perturbation of the system, 
it remains unclear to what extent they quantify causal information flows in the unperturbed system. As another consequence of the interventional conditioning, one cannot expect that causal information flow, as defined in \cite{AP08}, decomposes according to a chain rule. The current article is based on an idea from 2003 
which precedes the above-mentioned works on combining the theory of causal networks with information theory. 
It proposes a way to quantify causal information flows without perturbing the system through intervention. Instead, it is based on classical conditioning
in terms of the conditional distribution ${\Bbb P}(B | \rscr{A})$, where the $\sigma$-algebra is adjusted to  
the intrinsic mechanisms of the system. The derived information flow measure satisfies the chain rule and the natural properties of a general measure of causal strength postulated in \cite{JBGWS13}. The chain rule, together with the generalised Pythagoras relation from information geometry, provide powerful 
tools within the study of the problem of partial information decomposition \cite{BROJA14, LBJW18, APV00}. 
\medskip

Even though the introduced information flows satisfy natural 
properties, the aim of the present article is relatively moderate. For instance,    
the analysis is focussed on a simple network consisting of a number of inputs and one output, which is a strong restriction compared to the setting of \cite{AP08}.
The extension of the present work to more general casual networks remains to be worked out. 
Furthermore, this article does not address the important problem of causal inference 
\cite{PJS17}. In addition to these general directions of research, there are various ways to modify and extend the constructions of the present work 
and thereby potentially highlight further causal aspects of a given channel. The following perspectives are particularly important: 
\begin{enumerate}
\item In the present article, the information flow has been defined for a fixed finite measurable partition $\gamma$ of the state space $(\mathsf{Z}, \rscr{Z})$ 
of the output variable $Z$. A natural further step would be to consider the limit of information flows
with respect to an increasing sequence $\gamma_n$, $n = 1,2,\dots$, so that 
\[
     \bigvee_{n = 1}^\infty \sigma(\gamma_n) \; = \; \rscr{Z}. 
\]  
This limit will be an information flow measure that is independent of a particular partition.  
\item Throughout this article, the partition $\gamma$ has not been coupled with the $\sigma$-algebra of the channel $\nu$. This is the smallest $\sigma$-algebra for which 
all functions $\nu(x ; C)$, $C \in \rscr{Z}$, are measurable. Given that the channel is analysed with respect to the partition $\gamma$, one can restrict attention to the smallest $\sigma$-algebra for which the functions $\nu(x ; C)$, $C \in \gamma$, are measurable. This will be a potentially small $\sigma$-subalgebra 
of the one generated by the channel. We would then have a natural 
coupling of the partition $\gamma$ with the information used by the channel.   
\item We started with the family $\rscr{A}_M^\nu$ of $M$-traces of $\rscr{A}^\nu$, the $\sigma$-algebra generated by $\nu$, 
as the natural family associated with the channel. However, 
these traces do not form a projective family of $\sigma$-algebras. Such a projectivity is required for the chain rule for corresponding information flows. 
One can recover projectivity by extension and by reduction, leading to $\overline{\rscr{A}}^\nu_M$ and $\underline{\rscr{A}}^\nu_M$, respectively. 
Example \ref{sum} shows that the extension can lead to the largest $\sigma$-algebra and the reduction to the trivial one. Given this fact, 
one might ask whether the extension is too large and the reduction is too small to capture the causal aspects of $\nu$. Even though we argued above that these two 
projective families associated with $\nu$ capture two different kinds of causal aspects, this question remains to be 
further pursued. One possible direction would be the analysis of the context-dependent traces of ${\rscr A}^\nu$, that is the family of ${\rm tr}_{M, \bar{x}}({\rscr A}^\nu)$, 
$\bar{x} \in \mathsf{X}_{N \setminus M}$. Instead of conditioning with respect to the join 
\[
     {\rm tr}_{M}({\rscr A}^\nu) \; = \; \bigvee_{\bar{x} \in \mathsf{X}_{N \setminus M}} {\rm tr}_{M, \bar{x}}({\rscr A}^\nu),
\]  
one could adjust the conditioning to the individual $\sigma$-algebras ${\rm tr}_{M, \bar{x}}({\rscr A}^\nu)$. 
This would represent an important refinement of the presented theory. 
\end{enumerate}

\bibliography{information_flow_arxiv_july_7_2020} \bibliographystyle{plainnat}

\end{document}